\DeclareMathOperator{\str}{stretch}
\begin{document}
\mainmatter  
\title{Tree $t$-spanners in Outerplanar Graphs via Supply Demand Partition}
\author{N.S.Narayanaswamy \and G.Ramakrishna}
\institute{Department of Computer Science and Engineering\\ Indian Institute of Technology Madras, India.
\\ \email{\{swamy,grama\}@cse.iitm.ac.in}
}

\maketitle
\begin{abstract}
 A \emph{tree $t$-spanner} of an unweighted graph $G$ is a spanning tree $T$ such that for every two vertices their distance in $T$ is at  most $t$ times their distance in $G$.   
Given an unweighted graph $G$ and a positive integer $t$ as input, the \textsc{tree $t$-spanner} problem is to compute a tree $t$-spanner of $G$ if one exists.
This decision problem is known to be NP-complete even in the restricted class of unweighted planar  graphs.  
  We present a linear-time reduction from \textsc{tree $t$-spanner} in outerplanar graphs to the supply-demand tree partition problem. Based on this reduction, we obtain a linear-time algorithm to solve \textsc{tree $t$-spanner} in outerplanar graphs.  Consequently, we show that the minimum value of $t$ for which an input  outerplanar graph on $n$ vertices has a tree $t$-spanner can be found in $O(n \log n)$ time.
\end{abstract}

\section{Introduction}
The area of finding sparse data structures to maintain approximate distance information in a graph is a deep one with far-reaching importance in both engineering (computer networks) and mathematics (metric embeddings).   The sparsest possible data structures that give relevant distance information are spanning trees, and it is natural to approximate distance information within a desired factor, denoted by the parameter $t$ in this paper.  
 A \emph{tree $t$-spanner} of an unweighted graph $G$ is a spanning tree $T$ such that for every two vertices their distance in $T$ is at  most $t$ times their distance in $G$.  The parameter $t$ is called \emph{stretch}, and this notion was introduced by Peleg and Ullman \cite{hyperCubePelegUllman87}.  Further,  a \emph{minimum stretch spanning tree} of $G$ is a tree $t$-spanner of $G$ for the smallest possible value of $t$.
Finding a \emph{minimum stretch spanning tree} is a classical optimization problem in algorithmic graph theory that has several applications in  networks, distributed systems and biology \cite{Awerbuch_broadcastApplications,applicationInBiology,applicationInDistributed}.   The computational questions associated with tree spanners have been referred to as the  \textsc{tree $t$-spanner} problem.   The results in the literature on the \textsc{tree $t$-spanner} problem fall into two classes: when $t$ is a fixed number, and when $t$ is given as part of the input.  In this paper, we present results on the case when $t$ is part of the input, and on finding a minimum stretch spanning tree.   Further, our focus is on outerplanar graphs.
Outerplanar graphs are well-studied due to their rich combinatorial and topological properties.   We use the fact that they have a unique outerplane embedding and that their weak dual is a tree.  
For  problems like vertex cover, on planar graphs polynomial time approximation schemes, based on Baker's technique \cite{bakersTechnique}, are obtained via algorithms on $k$-outerplanar graphs.  These algorithms have an running time that is exponential in $k$, but polynomial in the graph size.  $1$-outerplanar graphs are outerplanar graphs, and we believe that our results are a first step towards such algorithms for \textsc{tree $t$-spanner} on $k$-outerplanar graphs.  We hope that this will eventually lead   to  $o(\log n)$ approximation algorithms for \textsc{tree $t$-spanner} on planar graphs, when $t$ is part of the input.    Tree spanner problems are closely related to the extensively studied cycle basis problems \cite{LiebchenZooOfTreeSpanners}.   For the minimum cycle basis problem on general graphs \cite{Kavitha09cyclebases}, no linear time algorithms are known. 
On outerplanar graphs, a linear time algorithm is known for the minimum cycle basis problem \cite{MCBOuterPlanar98}. More recently,  it has been shown that lex-short cycles form the {\em unique} minimum cycle basis for outerplanar graphs \cite{LiuL10}. 
  Such a result does not hold for the minimum stretch spanning tree which has a corresponding fundamental cycle basis question \cite{LiebchenZooOfTreeSpanners}.   Our study on outerplanar graphs furthers this connection between efficiently optimizing different metrics associated with cycle bases and the graph structure.
%
Outerplanar graphs also appear in several applications including computational drug design, bioinformatics, and telecommunications \cite{frequentMining10}.  For instance, 94.3 $\%$ of elements in the popular NCI data set are outerplanar \cite{frequentMining10}.   The cyclic structure of many organic chemical compounds is represented by cycle bases of the underlying graphs.

\noindent
 {\bf Related Work.}
 For each fixed $t \geq 4$, Cai and Corneil in \cite{Cai95treespanners} showed that \textsc{tree $t$-spanner} is NP-complete.  \textsc{tree $2$-spanner} is  polynomial-time solvable, and the status of  \textsc{tree $3$-spanner}  is open.   The problem has attracted extensive attention in many graph classes.   For planar graphs,  \textsc{tree $3$-spanner} was shown to be polynomial time solvable by Fekete and Kremer \cite{Fekete01spannersinPlanarGraphs}.   For each fixed $t$,  Fomin et al. showed that  the \textsc{tree $t$-spanner} problem restricted to planar graphs and apex-minor free graphs can be solved in polynomial time.  They showed this by designing an FPT algorithm with $t$ as the parameter in planar graphs and apex-minor free graphs \cite{fominFPTspannersPlanar}.  Also, for any fixed $t \geq 4$, the \textsc{tree $t$-spanner} is NP-complete even on chordal graphs and chordal bipartite graphs \cite{Brandstadt04TreeSpannersChordalGraphs,harnessResultChordalBipartite}.
Interval graphs, permutation graphs, regular bipartite graphs and distance hereditary graphs admit a tree 3-spanner \cite{additiveTreeSpannersKratsch,MadanlalSpanners} that can be found in polynomial time. As a consequence, a minimum stretch spanning tree in such graphs can be found in polynomial time due to the fact that \textsc{tree $2$-spanner} is polynomial-time solvable.  
Further, \textsc{tree 4-spanner} in 2-trees is linear-time solvable \cite{tree4SpannersIn2Trees}.

 While the results  in the previous paragraph are for the case when $t$ is fixed, our results are for the case when $t$ is part of the input.
 When $t$ is part of the input, Fekete and Kremer, in \cite{Fekete01spannersinPlanarGraphs}, showed that \textsc{tree $t$-spanner} in planar graphs is NP-complete.
To the best of our knowledge, when $t$ is part of the input, there is no published result on \textsc{tree $t$-spanner} in any subclass of planar graphs, except for grid graphs (see next paragraph).

The minimum stretch spanning tree problem is also referred to as the Minimum Max-Stretch spanning tree in the literature \cite{lognapproxEmekPeleg}.  Peleg and Emek, in \cite{lognapproxEmekPeleg}, present an $O(\log n)$-approximation algorithm to compute a minimum stretch spanning tree, and also show that unless P = NP, the problem cannot be approximated additively by any $o(n)$ factor.  
 A minimum stretch spanning tree in grid graphs can be found by a polynomial-time algorithm \cite{gridSubGridGraphs03}.   Therefore, when $t$ is part of the input, on grid graphs the \textsc{tree $t$-spanner} problem can be solved in polynomial time.


In this work, we reduce \textsc{tree $t$-spanner} in outerplanar graphs to the supply-demand partition problem in trees (defined later). This partition problem was introduced by Ito et. al  and  has applications in power delivery networks \cite{PartitioningTreesSupplyDemand}. 
The supply-demand partition problem is NP-complete even in bipartite graphs and linear time solvable in trees \cite{PartitioningTreesSupplyDemand}. Recently, Kawabata et. al considered edge capacities in the supply-demand partition in trees and designed a linear-time algorithm \cite{Kawabata13}. See \cite{minCostSupplyDemand,GraphsSupplyDemand} for other variations of supply-demand partition.

 \noindent
{\bf Our Results.} 
Our main result is that we present a linear time algorithm for the \textsc{tree $t$-spanner} problem when $t$ is part of the input  in outerplanar graphs.  Thus, in a manner of speaking, we have one more cycle basis question that admits a linear time algorithm on outerplanar graphs.   
As a consequence of our linear time algorithm, it follows that a minimum stretch spanning tree can be obtained in $O(n \log n)$ time in outerplanar graphs.   Our results use properties of the topological structure of outerplanar graph as opposed to another important property that outerplanar graphs are of treewidth two.  Further, for any  $t$, the property that a graph has a tree $t$-spanner  can be described as a formula in Monadic Second Order Logic (MSOL) \cite{fominFPTspannersPlanar}.  It must be mentioned here that since $t$ is part of the input, the treewidth of the graph being two {\em does not}  result in a linear time algorithm by a direct application of the well known Courcelle's theorem \cite{courcelleTheorem}.   The reason is that  the length of the MSOL formula depends on the parameter $t$, which is not {\em fixed} in the \textsc{tree $t$-spanner} problem that we consider.  In other words, the size of the formula is {\em not} a constant.  Actually, it is believed that the length of the smallest MSOL formula would have an exponential tower dependency on $t$ that is of the form, $2^{2^{2^{2^t}}}$, since the MSOL formulation has 4 alternating quantifiers.  See \cite{towersPaper} on the length of the MSOL formulae as a function of the number of alternating quantifiers.    Another important property of the treewidth being 2 is that an optimal tree decomposition can be obtained in linear time using \cite{FindTreeDecomp_Bodlaender96}.   It is not clear how the standard approaches for solving NP-hard problems on graphs of bounded treewidth \cite{niedermeierBook} can be made to work for \textsc{tree $t$-spanner} on outerplanar graphs when $t$ is part of the input.  However, we exploit the topological structure to obtain a linear time algorithm.

(In Section \ref{SectionProperties}) We first present the structure of a canonical tree $t$-spanner in outerplanar graphs that is useful to establish a connection between \textsc{tree $t$-spanner} in outerplanar graphs  and the \textsc{tree S-partition} problem (defined below).
(In Section \ref{treeSpannersSpartitionSection})
We then present a linear-time reduction from \textsc{tree $t$-spanner} in an outerplanar graph $G$ to \textsc{tree S-partition} in a tree $\tilde{T}$, the weak dual of $G$. Further, we present a linear-time algorithm to construct a tree $t$-spanner of $G$ from a tree S-partition of $\tilde{T}$.
(In Section \ref{SupplyDemand_section})
We describe a linear-time reduction from \textsc{tree S-partition}  in $\tilde{T}$ to the well-studied \textsc{supply-demand tree partition} problem \cite{PartitioningTreesSupplyDemand} in a tree $T'$, which is constructed from $\tilde{T}$.
Further, we present a linear-time algorithm to obtain a tree S-partition of $\tilde{T}$ from a supply-demand tree partition of $T'$.
 We use the  linear-time algorithm in \cite{PartitioningTreesSupplyDemand} to decide \textsc{supply-demand tree partition},  and produces a supply-demand tree partition, if one exists.
 
\noindent
We now state our main results formally.

\begin{theorem}
\label{treeSpannerinOP}
Given an outerplanar graph $G$ on $n$ vertices and a positive integer $t$,
 a tree $t$-spanner of $G$, if one exists, can be found in $O(n)$ time.
\end{theorem}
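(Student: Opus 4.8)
The plan is to prove Theorem~\ref{treeSpannerinOP} by assembling the chain of linear-time reductions that the introduction promises, so that the whole algorithm is a composition of four linear-time stages. First I would exploit the topological structure of the outerplanar graph $G$: since $G$ has a unique outerplane embedding, I would compute this embedding and its weak dual $\tilde{T}$ in $O(n)$ time, recalling that $\tilde{T}$ is a tree whose nodes correspond to the bounded faces of $G$. The essential combinatorial input here is the structure of a \emph{canonical} tree $t$-spanner established in Section~\ref{SectionProperties}: I would use the fact that if $G$ admits any tree $t$-spanner then it admits one in canonical form, so that searching for a spanner reduces to a decision that can be phrased purely in terms of which internal (dual) edges are deleted. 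The number of faces, and hence $|V(\tilde{T})|$, is $O(n)$ because $G$ is planar and sparse, so every subsequent step operates on an object of size $O(n)$.

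Next I would carry out the three reductions in sequence, each of which is asserted to run in linear time. Stage one is the reduction from \textsc{tree $t$-spanner} in $G$ to \textsc{tree S-partition} in $\tilde{T}$, together with the linear-time procedure that recovers an actual tree $t$-spanner of $G$ from a tree S-partition of $\tilde{T}$. Stage two is the reduction from \textsc{tree S-partition} in $\tilde{T}$ to \textsc{supply-demand tree partition} in a tree $T'$ built from $\tilde{T}$, again paired with its linear-time inverse transformation. Stage three is to invoke the algorithm of Ito et al.~\cite{PartitioningTreesSupplyDemand}, which decides \textsc{supply-demand tree partition} and produces a witness partition when one exists, in $O(|V(T')|)=O(n)$ time. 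Composing these, the decision ``does $G$ have a tree $t$-spanner?'' is answered in $O(n)$ time, and a witnessing tree is reconstructed by running the two inverse transformations back up the chain, each in $O(n)$ time.

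The correctness argument is then a matter of verifying that the reductions are faithful in both directions: a \textsc{yes}-instance maps to a \textsc{yes}-instance and the recovered tree is genuinely a tree $t$-spanner of $G$, while a \textsc{no}-instance for $G$ cannot produce a spurious supply-demand partition. For this I would lean on the canonical-form result to guarantee that the spanner produced by the inverse maps has the required stretch, and I would appeal to the correctness statements accompanying each reduction (the content of Sections~\ref{treeSpannersSpartitionSection} and~\ref{SupplyDemand_section}) to close the equivalence of the intermediate problems. Summing the $O(n)$ cost of each of the (at most) five stages — embedding and weak dual, the two forward reductions, the supply-demand solver, and the two backward reconstructions — gives the claimed $O(n)$ total running time.

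I expect the main obstacle to lie not in stringing the reductions together but in the structural step underpinning stage one: establishing that a canonical tree $t$-spanner always exists when \emph{any} tree $t$-spanner does, and that the stretch condition on $G$ translates exactly into the supply/demand numerics on $T'$. The subtlety is that $t$ is part of the input, so the reduction cannot afford any dependence on $t$ in either its size or its running time; the demands and supplies encoding the stretch constraint must be definable by $O(1)$-time local computations at each face of $G$, and one must check that an admissible supply-demand partition corresponds precisely to a choice of spanner whose stretch is at most $t$ and never inadvertently larger. Verifying this exact numerical correspondence, uniformly in $t$, is the delicate part; once it is in place, the linear-time claim follows routinely from the composition of linear-time stages.
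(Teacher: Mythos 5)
Your proposal follows essentially the same route as the paper's own proof: reduce \textsc{tree $t$-spanner} on $G$ to \textsc{tree S-partition} on the weak dual $\tilde{T}$ (using the canonical-spanner structure of Section~\ref{SectionProperties}), then reduce to \textsc{supply-demand tree partition}, solve with the linear-time algorithm of \cite{PartitioningTreesSupplyDemand}, and run the two linear-time inverse constructions back up the chain. The details you flag as delicate (existence of canonical spanners, the face-weight/stretch numerics with bound $t-1$) are precisely what the paper establishes in Sections~\ref{treeSpannersSpartitionSection} and~\ref{SupplyDemand_section}, so the composition argument is correct as stated.
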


From Theorem \ref{treeSpannerinOP},  by using binary search on the value of $t$, we have the following corollary.
 
 \begin{corollary}
\label{Corollary_MSST}
 Given an outerplanar graph $G$ on $n$ vertices,  a minimum stretch spanning tree of $G$ can be found in $O(n \log n)$ time.
 \end{corollary}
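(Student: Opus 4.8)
The plan is to prove Corollary~\ref{Corollary_MSST} by invoking Theorem~\ref{treeSpannerinOP} as a black box and wrapping it inside a binary search over the stretch parameter $t$. First I would observe that the property ``$G$ has a tree $t$-spanner'' is monotone in $t$: if a spanning tree $T$ is a tree $t$-spanner, then for every pair of vertices $u,v$ we have $d_T(u,v) \le t \cdot d_G(u,v) \le t' \cdot d_G(u,v)$ for any $t' \ge t$, so the same tree $T$ witnesses that $G$ has a tree $t'$-spanner. This monotonicity is exactly what makes binary search applicable: there is a threshold value $t^\ast$ (the minimum stretch) such that $G$ has a tree $t$-spanner if and only if $t \ge t^\ast$, and the minimum stretch spanning tree is precisely a tree $t^\ast$-spanner.

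Next I would pin down the search range. Since $G$ is connected (it has a spanning tree), any spanning tree $T$ gives a finite stretch, and the diameter of any spanning tree of an $n$-vertex graph is at most $n-1$; for any edge $\{u,v\}$ of $G$ we have $d_G(u,v)=1$, so the worst-case stretch of any spanning tree is at most $n-1$. Hence $1 \le t^\ast \le n-1$, and it suffices to binary search for the smallest integer $t$ in the range $\{1,\dots,n-1\}$ for which the algorithm of Theorem~\ref{treeSpannerinOP} reports that a tree $t$-spanner exists. Each such query costs $O(n)$ time by Theorem~\ref{treeSpannerinOP}, and binary search over an interval of length $O(n)$ performs $O(\log n)$ queries, giving total time $O(n \log n)$.

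The one technical point I would verify carefully is that it genuinely suffices to consider integer values of $t$. Because $G$ is unweighted, all distances $d_G(u,v)$ and $d_T(u,v)$ are positive integers; the stretch of a fixed spanning tree $T$ is $\max_{u,v} d_T(u,v)/d_G(u,v)$, which is a ratio of integers, but the relevant threshold for the decision problem is integral in the following sense. For a fixed integer $t$, the condition $d_T(u,v) \le t\, d_G(u,v)$ for all pairs is what the algorithm of Theorem~\ref{treeSpannerinOP} tests, and the smallest integer $t$ for which some spanning tree satisfies it is exactly the value returned by binary search. This is the definition of the \textsc{tree $t$-spanner} problem as stated in the paper (with $t$ a positive integer), so the minimum such integer $t$ is the minimum stretch in the sense defined, and the tree $t^\ast$-spanner produced by the final successful query is a minimum stretch spanning tree.

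The argument is almost entirely routine once monotonicity and the $O(n)$ per-query bound are in hand, so I would not expect a genuine obstacle. The only place demanding a little care is confirming the correctness of binary search, namely that the ``yes'' region $\{t : G \text{ has a tree } t\text{-spanner}\}$ is an upward-closed interval; this follows immediately from the monotonicity observation above. Combining the $O(\log n)$ queries, each run in $O(n)$ time by Theorem~\ref{treeSpannerinOP}, yields the claimed $O(n \log n)$ bound and completes the proof.
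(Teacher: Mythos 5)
Your proposal is correct and follows exactly the paper's route: the paper proves this corollary by binary searching on $t$ and invoking Theorem~\ref{treeSpannerinOP} as the $O(n)$-time decision oracle, which is precisely your argument. Your added details (monotonicity of the tree $t$-spanner property in $t$, the range $1 \le t^\ast \le n-1$, and integrality of the stretch, which in fact also follows from Lemma~\ref{treeSpannersProp} since the stretch is a maximum of tree distances over non-tree edges) are all sound elaborations of what the paper leaves implicit.
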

 
\noindent
\textbf{Tree partition problems:} In our work, we introduce the \textsc{tree S-partition} problem, a variant of the \textsc{bounded component spanning forest} (see page number 208 in  \cite{GareyJohnsonBook}),  and relate it to tree spanners in outerplanar graphs. The \textsc{bounded component spanning forest} is known to be NP-complete \cite{GareyJohnsonBook}.  To the best of our knowledge, there is no literature on \textsc{tree S-partition} which we have defined as follows.\\

\noindent
\begin{tabular}{|p{16.3cm}|}
 \hline
\textsc{Tree S-partition} \\

\textbf{Instance: }  A tree $T$, a weight function $w: V(T) \to \mathbb{N}$, a set $S \subseteq V(T)$ of special vertices, and $t\in \mathbb{N}$.  \\
\textbf{Question: }  Does there exist a partition of $V(T)$ into sets $V_{1}, \ldots, V_{|S|}$ such that 
for $1\leq i \leq |S|$, \\$T[V_{i}]$ is connected, $|V_{i} \cap S|=1$ and ${\displaystyle \sum_{v \in V_i} w(v)\leq t}$ ?\\
\hline
\end{tabular}\\

Let $\pi = \{V_1, \ldots, V_{|S|}\}$ be a partition of $V(T)$. For each $1 \leq i \leq |S|$, if $T[V_i]$ is a tree and $V_i$ has exactly one vertex from $S$, then $\pi$ is referred to as a \emph{tree S-partition} of $T$.
The \textsc{supply-demand tree partition} problem is defined as follows.\\

\noindent
\begin{tabular}{|p{16.3cm}|}
  \hline
{\textsc{Supply-demand tree partition} \cite{PartitioningTreesSupplyDemand} }\\

\textbf{Instance: } A Tree $T$ such that $V(T) = V_{s} \uplus V_{d}$, a supply function $s: V_{s} \to \mathbb{R}^{+}$, and a demand function $d: V_{d} \to \mathbb{R}^{+}$.\\
\textbf{Question: } 
  Does there exist a partition of $V(T)$ into sets $V_{1}, \ldots, V_{k}$, where $ k = |V_{s}|$,  such that  \\for $1\leq i \leq k$, $T[V_{i}]$ is connected, 
$V_{i}$ contains exactly one vertex $u \in V_{s}$  
and ${\displaystyle \sum_{v \in V_i \setminus V_s} d(v)\leq s(u)}$ ?\\
\hline
\end{tabular}\\

In \textsc{supply-demand tree partition}, each element $u$ in $V_s$ is referred to as supply vertex and $s(u)$ denotes the supply value of $u$. Similarly, each element $u$ in $V_d$ is referred to as demand vertex and $d(u)$ denotes the demand value of $u$.
Let $\pi = \{V_1, \ldots, V_k\}$ be a partition of $V(T)$. For each $1 \leq i \leq k$, if $V_i$ satisfies all the three constraints described, then $\pi$ is referred to as a \emph{supply-demand tree partition} of $T$.

%

\section{Preliminaries}
Throughout this paper, we consider tree spanners in unweighted outerplanar graphs.

\noindent
\textbf{Graph theoretic preliminaries:} In this paper, we consider only simple, finite, connected and undirected  graphs. 
We refer to \cite{dbwestBook} for standard graph theoretic terminologies.
Let $G = (V(G),E(G))$ be a graph on vertex set $V(G)$ and edge set $E(G)$.
 For a set $S \subseteq V(G), G[S]$ denotes the graph induced on the set $S$ and $G - S$ denotes $G[V(G)- S]$.
For a set $V$, the sets $V_1, \ldots ,V_r$ form a partition if and only if $\bigcup_{i=1}^{r}V_{i} = V$ and $\forall i,j$ $1\leq i < j \leq r, V_i \cap V_j = \emptyset$. Each such $V_{i}$ is referred to as a \emph{part} of the partition. The number of edges in a shortest path from a vertex $u$ to a vertex $v$ in $G$ is called the \emph{distance} between $u$ and $v$, and is denoted by $d_G(u,v)$.

\noindent
\textbf{Planar graph preliminaries:}
A graph is said to be \emph{planar} if it can be drawn in the plane so that its edges intersect only at their ends; otherwise it is \emph{nonplanar}. A planar graph is said to be \emph{outerplanar} if it can be embedded on the plane such that all vertices lie on the boundary of its exterior region. Such an embedding is called an \emph{outerplanar embedding}.
Every 2-connected outerplanar graph has a unique outerplane embedding \cite{Syslo82UniueEmbedding}. 
In this paper, a planar graph and an outerplanar graph are associated with  a fixed planar and outerplanar embedding, respectively.
The regions defined by the planar embedding of a planar graph $G$ are \emph{faces} of $G$ and the set of all faces of $G$ is denoted by $Faces(G)$. For each face $f \in Faces(G)$, $V(f)$ and $E(f)$ denote the sets of vertices and edges of $f$, respectively.  Two faces $f, f' \in Faces(G)$ are \emph{adjacent} if $E(f) \cap E(f') \neq \emptyset$.
In every planar embedding, there is a unique unbounded face called \emph{exterior} face and it is denoted by $f_{ext}$.
An edge $e$ is \emph{external} if $e \in E(f_{ext})$, otherwise it is \emph{internal}.
All the bounded faces of $G$ are \emph{interior}. Interior faces that are adjacent to the exterior face are called $E\emph{-}faces$ and other interior faces are called $I\emph{-}faces$. The set of interior faces, $I\emph{-}faces$ and $E\emph{-}faces$ of $G$ are denoted by  $Int\emph{-}faces(G)$,$I\emph{-}faces(G)$, and $E\emph{-}faces(G)$, respectively. Also, $Faces(G) = \{f_{ext}\}$ $\cup$ $Int\emph{-}faces(G)$ and $Int\emph{-}faces(G)= E\emph{-}faces(G)$ $\cup$ $I\emph{-}faces(G)$. An example is shown in Figure \ref{PlanarPrelims} to illustrate these concepts.

\begin{figure*}[htp!]
\centering
\includegraphics[scale=0.4]{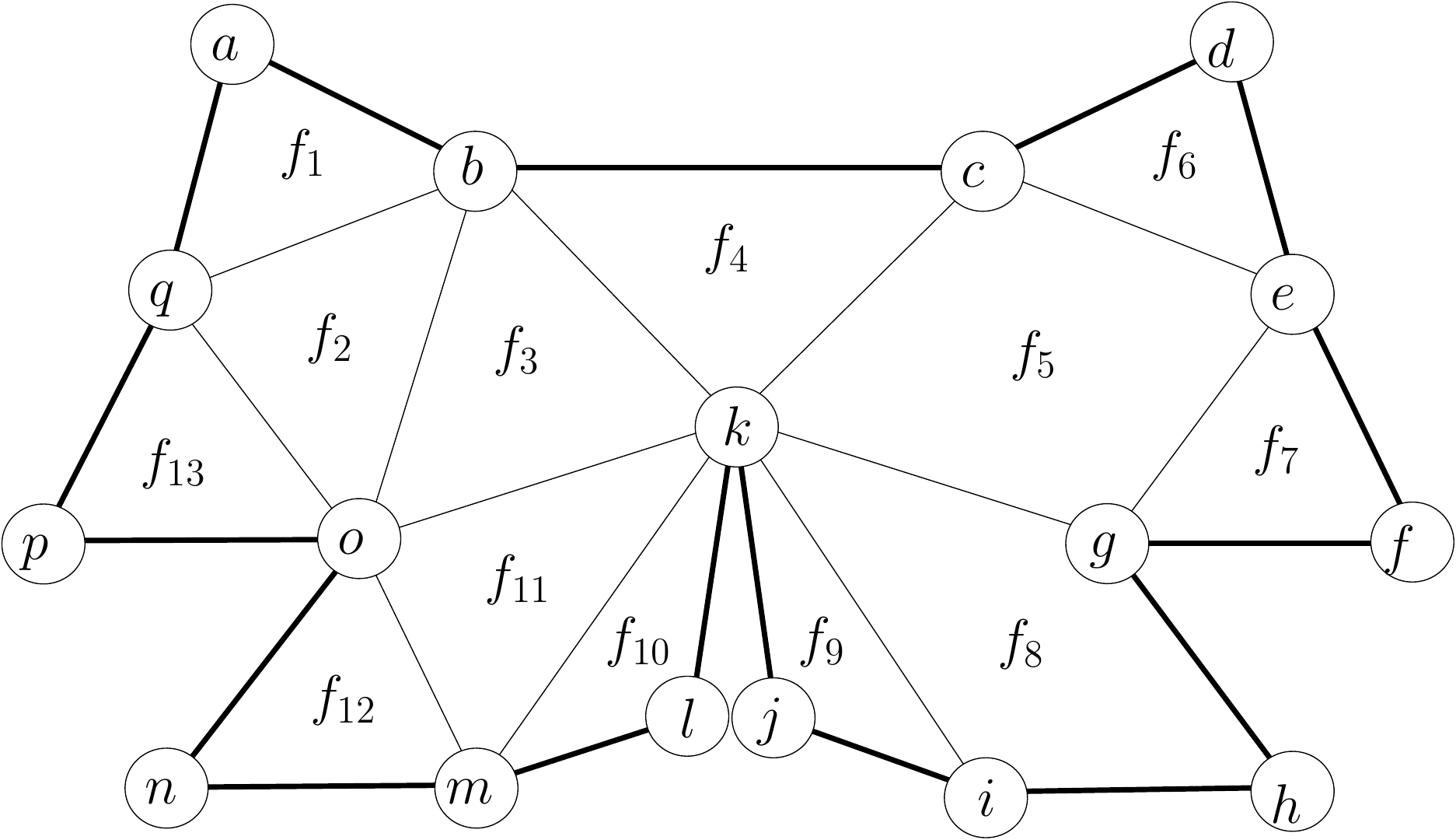}
\caption{For the outerplanar graph $G$ shown, thick edges are external and other edges are internal.  $E\emph{-}faces(G) = \{f_1,f_4,f_6,f_7,f_8,f_9,f_{10},f_{12},f_{13} \}$, $I\emph{-}faces(G) = \{ f_2,f_3,f_5,f_{11} \}$. }
\label{PlanarPrelims}
\end{figure*}

\begin{lemma}
(Theorem 1.1 in \cite{Cai95treespanners})
\label{treeSpannersProp}
A spanning tree $T$ of an arbitrary graph $G$ is a tree $t$-spanner if and only if for every non-tree edge $(x,y)$ of $T$, $d_{T}(x,y) \leq t$.
\end{lemma}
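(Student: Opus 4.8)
The plan is to establish the biconditional by proving each direction separately, where the forward implication is essentially immediate from the definitions and the reverse implication is the substantive part that rests on a single application of the triangle inequality along a shortest path.

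For the forward direction, I would assume $T$ is a tree $t$-spanner and let $(x,y)$ be an arbitrary non-tree edge of $T$. Since $(x,y) \in E(G)$ and $G$ is unweighted, $d_G(x,y) = 1$. Applying the spanner condition to the pair $x,y$ then gives $d_T(x,y) \leq t \cdot d_G(x,y) = t$, which is exactly what is claimed.

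For the reverse direction, I would assume that every non-tree edge $(x,y)$ satisfies $d_T(x,y) \leq t$, and let $u,v$ be an arbitrary pair of vertices. The key step is to fix a shortest path $u = w_0, w_1, \ldots, w_k = v$ in $G$, so that $k = d_G(u,v)$, and to argue edge-by-edge. Each consecutive pair $(w_{i-1}, w_i)$ is an edge of $G$, hence either a tree edge or a non-tree edge of $T$; in the former case $d_T(w_{i-1}, w_i) = 1 \leq t$, while in the latter case $d_T(w_{i-1}, w_i) \leq t$ by the hypothesis. In both cases $d_T(w_{i-1}, w_i) \leq t$. Telescoping via the triangle inequality for the tree metric $d_T$ then yields $d_T(u,v) \leq \sum_{i=1}^{k} d_T(w_{i-1}, w_i) \leq kt = t \cdot d_G(u,v)$; since $u,v$ were arbitrary, $T$ is a tree $t$-spanner.

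The only point I would flag as needing care — rather than a genuine obstacle — is guaranteeing $d_T(w_{i-1}, w_i) \leq t$ for the \emph{tree} edges along the path, which requires $t \geq 1$. This is harmless here since $t$ is a positive integer (and stretch is always at least $1$), so $d_T = 1 \leq t$ holds automatically. Everything else is a routine summation of the triangle inequality over a geodesic of $G$, so I do not anticipate any further difficulty.
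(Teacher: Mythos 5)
Your proof is correct. Note that the paper itself does not prove this lemma at all --- it is imported verbatim as Theorem 1.1 of Cai and Corneil \cite{Cai95treespanners} --- so there is no internal proof to compare against; your argument (forward direction from $d_G(x,y)=1$, reverse direction by summing $d_T$ over the edges of a $G$-geodesic via the triangle inequality) is exactly the standard one from that source, and your side remark that tree edges need $t \geq 1$ is correctly dispatched since $t$ is a positive integer.
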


 \noindent
Let $T$ be a spanning tree of a graph $G$. For any edge $(u,v) \in E(G) \setminus E(T)$, let $e=(u,v)$, the stretch of $e$ is defined as the distance between $u$ and $v$ in $T$. The stretch of $e$ is denoted by $\str(e)$. If $e \in E(T)$ then $\str(e) = 1$, otherwise $\str(e) = d_T(u,v)$. 
By Lemma \ref{treeSpannersProp}, the stretch of  $T$ is $\max \{ \str(e) \mid e \in E(G) \setminus E(T) \}$.
For vertices $u,v \in V(T)$, $P_{T}(u,v)$ denotes the path between $u$ and $v$ in $T$.
An edge $e \in E(G) \setminus E(T)$ is a \emph{non-tree} edge of $T$.
For a non-tree edge $(u,v)$  of $T$, the cycle formed by $(u,v)$ and $P_{T}(u,v)$ is referred to as a \emph{fundamental} cycle. 
A non-tree edge $e$ of $T$ is said to be \emph{external} if $e$ is external in $G$.
A fundamental cycle is said to be \emph{external} if the associated non-tree edge is external. 

It is easy to see that the vertex connectivity of an outerplanar graph is at most two, because the end vertices of every internal edge form a vertex separator. If an outerplanar graph $G$ has cut vertices, a minimum stretch spanning tree of $G$ can be obtained by performing a union on the edges of minimum stretch spanning trees of the maximal 2-vertex-connected components of $G$. So without loss of generality, we consider 2-vertex-connected outerplanar graphs.

 In the rest of the paper, $G$ denotes a 2-vertex-connected outerplanar graph. We refer to the internal and external edges with respect to the graph $G$.  The same holds for $E\emph{-}faces$, $I\emph{-}faces$ and $Int\emph{-}faces$.

\section{Canonical Tree Spanners in Outerplanar Graphs}
\label{SectionProperties}
In this section, we identify a canonical structure that is respected by some tree $t$-spanner: in
each $E\emph{-}face$, exactly one external edge is a non-tree edge.  This is formally proved in 
Theorem \ref{NumPartsEface}  which is the main result in this section.
\begin{lemma}
\label{LemmaPartitionNonTreeEdges}
Let $T$ be an arbitrary spanning tree of $G$.
Let $\{C_1, \ldots, C_k\}$ be the set of external fundamental cycles of $T$. 
For each $i$, let $G_i = G[V(C_i)]$. \\
a. $Int\emph{-}faces(G) = Int\emph{-}faces(G_1) \uplus \ldots \uplus Int\emph{-}faces(G_k)$.\\
b. The set of non-tree edges of $T$ in $G$ is a disjoint union of the set of non-tree edges of $T$ in $G_1, \ldots, G_k$.
\end{lemma}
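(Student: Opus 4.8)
The plan is to organize both parts around the weak dual tree $\tilde{T}$ of $G$, whose nodes are the interior faces of $G$ and whose edges are in bijection with the internal edges of $G$; since $G$ is $2$-connected, $\tilde{T}$ is a tree, the outer boundary of $G$ is a Hamiltonian cycle, and its edges are exactly the external edges. My first step is a counting argument. The number of non-tree edges of $T$ equals the cyclomatic number $m-n+1$, which is also $|Int\text{-}faces(G)| = |V(\tilde{T})|$. Subtracting the $k$ external non-tree edges and using $|E(\tilde{T})| = |V(\tilde{T})|-1$, I would conclude that $T$ contains exactly $k-1$ internal edges. Deleting the $k-1$ corresponding dual edges from $\tilde{T}$ therefore splits it into exactly $k$ subtrees $\tilde{T}_1,\ldots,\tilde{T}_k$, and I will show that these subtrees are precisely the face sets $Int\text{-}faces(G_1),\ldots,Int\text{-}faces(G_k)$; since the subtrees partition $V(\tilde{T})$, part (a) follows at once.

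To match each subtree with a cycle $C_i$, I would examine the boundary of the region $U_j=\bigcup_{f\in \tilde{T}_j} f$. Its boundary edges are the external edges of $G$ lying on $U_j$ together with the internal edges that $T$ placed on the cut, which are tree edges; hence the boundary cycle of each $U_j$ must contain at least one non-tree edge, and that edge has to be external. As there are exactly $k$ external non-tree edges and $k$ regions, a pigeonhole argument forces each region boundary to carry exactly one external non-tree edge $e_i$, the remaining boundary edges being tree edges. Thus the boundary of $U_j$ is exactly the fundamental cycle $C_i=e_i\cup P_T(u_i,v_i)$, so $U_j$ is the region enclosed by $C_i$ and the correspondence between subtrees and external fundamental cycles is a bijection.

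The heart of the proof, and the step I expect to be the main obstacle, is to show that $G_i=G[V(C_i)]$ captures exactly the faces enclosed by $C_i$ and nothing more, i.e.\ that no edge of $G$ with both endpoints on $C_i$ lies outside the enclosed region. Here I would use two facts. First, $T[V(C_i)]$ is connected, acyclic and spans $V(C_i)$, so it equals $P_T(u_i,v_i)$; hence every tree edge among $V(C_i)$ already lies on $C_i$, and any chord of $C_i$ is a non-tree edge. Second, for a chord $g=(w,w')$ the tree path $P_T(w,w')$ is a sub-path of $P_T(u_i,v_i)$ and hence an arc of $C_i$, so the fundamental cycle of $g$ shares that whole arc with $C_i$. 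If $g$ lay outside the region enclosed by $C_i$, this shared arc would also be a sub-path of a second fundamental cycle $C_{i'}$; but two distinct regions of the partition share at most one boundary edge, because in the tree-quotient of $\tilde{T}$ their subtrees are joined by at most one edge, forcing the arc to consist of a single edge parallel to $g$ — impossible in a simple graph. Outerplanarity enters crucially here through the fact that every vertex lies on $f_{ext}$, so no vertex is strictly interior to any region; this guarantees that all vertices of the enclosed faces lie in $V(C_i)$. Consequently every chord of $C_i$ lies inside its region, $G_i$ is exactly the subgraph filling that region, and $Int\text{-}faces(G_i)=\tilde{T}_{j(i)}$.

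Part (b) then falls out along the way. The same shared-arc argument shows that no non-tree edge can have both endpoints on two different cycles $C_i$ and $C_{i'}$, which gives disjointness. For covering, an external non-tree edge is one of the $e_i$ and lies in the matching $G_i$, while an internal non-tree edge has its two incident faces in a common subtree and is thus a chord inside a single region, placing it in exactly one $G_i$; and because $T[V(C_i)]=P_T(u_i,v_i)$, any non-tree edge of $G_i$ is genuinely a non-tree edge of $G$. Combining these observations yields the disjoint-union statement and completes both parts.
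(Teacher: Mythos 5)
Your proposal is correct, but it proves the lemma by a genuinely different route than the paper. The paper argues by induction on the number of internal edges of $G$ that belong to $T$: it picks such an edge $(u,v)\in E(T)$, uses the fact that $\{u,v\}$ separates $G$ into two subgraphs $H_1,H_2$ meeting only in $u,v$ and the edge $(u,v)$, shows $T$ restricts to a spanning tree of each $H_i$, and then combines the induction hypotheses on $H_1$ and $H_2$; this is elementary and needs no explicit topology beyond the separation property of internal edges. You instead argue globally in the weak dual $\tilde T$: counting gives that $T$ contains exactly $k-1$ internal edges, deleting their dual edges splits $\tilde T$ into $k$ subtrees, a pigeonhole on the $k$ external non-tree edges identifies each subtree's region boundary with an external fundamental cycle, and a shared-arc/tree-quotient argument shows chords of $C_i$ cannot escape their region. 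Your approach buys more than the paper's: it makes explicit the bijection between dual subtrees and external fundamental cycles, which is precisely the correspondence that drives the paper's later reduction (Theorem \ref{outerplanarReductionTheorem}), and it yields the sharper facts that $T$ has exactly $k-1$ internal edges and that distinct $G_i$'s share at most one edge. The price is the topological bookkeeping, and there you compress two steps that deserve care: (i) you call the boundary of $U_j$ a ``cycle'' before that is established --- a priori it is only an edge-disjoint union of cycles (the symmetric difference of the face boundaries), and it is your own pigeonhole that then forces it to be a single cycle, so the argument should be ordered that way; (ii) the claim that a chord lying outside $C_i$'s region yields a shared arc with a \emph{second} fundamental cycle $C_{i'}$ needs the intermediate observation that the chord lies inside (or on) exactly one other region and that, by outerplanarity, its endpoints lie on that region's boundary cycle. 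Both repairs stay entirely within your framework, so these are presentation gaps rather than flaws in the approach.
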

\begin{proof}
Let $r$ be the number of edges in $T$ that are internal edges in $G$.
The proof is by induction on $r$.
Suppose $r=0$, then all the edges in $T$ are external edges in $G$.
Then $T$ has exactly one external fundamental cycle, which is also an exterior face of $G$.  Consequently, the two claims in this lemma hold.
Consider the case when $r\geq 1$.
Then there exists an edge $(u,v) \in  E(T)$, such that  $(u,v)$ is internal in $G$.
We decompose $G$ into graphs $H_1$ and $H_2$ such that
$V(G) = V(H_1) \cup V(H_2)$, $E(G) = E(H_1) \cup E(H_2)$, $V(H_1) \cap V(H_2) = \{u,v\}$ and $E(H_1) \cap E(H_2) = \{ (u,v) \}$.
This decomposition is possible as the end vertices of any internal edge in an outerplanar graph disconnects the graph exactly into two components.
Observe that $Int\emph{-}faces(G) = Int\emph{-}faces(H_1) \uplus Int\emph{-}faces(H_2)$.

For $i \in \{1,2\}$, let $T_i = T[V(H_i)]$. For some fixed $i \in \{1,2\}$, let $x,y \in V(T_i)$ and $(u,v) \neq (x,y)$, we now show that there is a path between $x$ and $y$ in $T_i$.
Let $P(x,y)$ be the path between $x$ and $y$ in $T$.
If all the vertices in $P(x,y)$ are in $T_i$, then we are done.
Consider the case that there is a vertex $z$  in $P(x,y)$, such that $z$ is in  $T_j$ but not in $T_i$, where $i\neq j$.
Since the vertices $x,y \in V(T_i)$ and $z \in V(T_j)$, and we know that $\{u,v\}$ is a minimal vertex separator of $G$, it follows that there is a subpath $P(u,v) \subseteq P(x,y)$ in $T$, such that $P(u,v)$ has  the vertex $z$.
Therefore, $P(u,v)$ has at least two edges in $T$. As $(u,v)$ is in $T$, $P(u,v) + (u,v)$ forms a cycle in $T$. This contradicts that $T$ is an acyclic graph. As a result, for $i \in \{1,2\}$, $T_i$ is a spanning tree of $G_i$.
Since $E(H_1) \cap E(H_2)$ is $(u,v)$, the set of non-tree edges of $T$ is a disjoint union of the set of non-tree edges of $T_1$ in $H_1$ and the set of non-tree edges of $T_2$ in $H_2$.
 It implies that the external fundamental cycles of $T$ is a disjoint union of  external fundamental cycles of $T_1$ and external fundamental cycles of $T_2$.
Without loss of generality, assume that $C_1, \ldots, C_l$ be the external fundamental cycles in $T_1$ and $C_{l+1}, \ldots, C_k$ be the external fundamental cycles in $T_2$.

By the induction hypothesis, $Int\emph{-}faces(H_1) = Int\emph{-}faces(G_1) \uplus \ldots \uplus Int\emph{-}faces(G_l)$ and $Int\emph{-}faces(H_2) = Int\emph{-}faces(G_{l+1}) \uplus \ldots \uplus Int\emph{-}faces(G_k)$.
As a result, $Int\emph{-}faces(G) = Int\emph{-}faces(G_1) \uplus \ldots \uplus Int\emph{-}faces(G_k)$.
Also by the induction hypothesis, the set of non-tree edges of $T_1$ in $H_1$ is a disjoint union of the set of non-tree edges of $T_1$ in $G_1, \ldots, G_{l}$ and the set of non-tree edges of $T_2$ in $H_2$ is a disjoint union of the set of non-tree edges of $T_2$ in $G_{l+1}, \ldots, G_{k}$.
Consequently, the set of non-tree edges of $T$ in $G$ is a disjoint union of the set of non-tree edges of $T$ in $G_1, \ldots, G_k$.
\qed
\end{proof}

\noindent
For an illustration of Lemma \ref{LemmaPartitionNonTreeEdges}, consider the outerplanar graph $G$ shown in Fig \ref{FigPartition}.
Let $G_1 = G[a,b,o,p,q]$, $G_2 = G[b,c,k,o]$, $G_3 = G[c,d,e,f,g,k]$, $G_4=G[g,h,i,j,k]$, $G_5=G[k,l,m,n,o]$.
Observe that $Int\emph{-}faces(G_1)$ $ = \{f_1, f_2, f_{13} \}$, $Int\emph{-}faces(G_2) = \{f_3, f_4\}$, $Int\emph{-}faces(G_3) = \{ f_5, f_6, f_7\}$, $Int\emph{-}faces(G_4) = \{ f_8, f_9\}$ and $Int\emph{-}faces(G_5) = \{ f_{10}, $ $f_{11}, f_{12}\}$.
Further, $Int\emph{-}faces(G)$ is  a partition of $Int\emph{-}faces(G_1), \ldots, Int\emph{-}faces(G_5)$.

\begin{figure*}[htp!]
\centering
\includegraphics[scale=0.4]{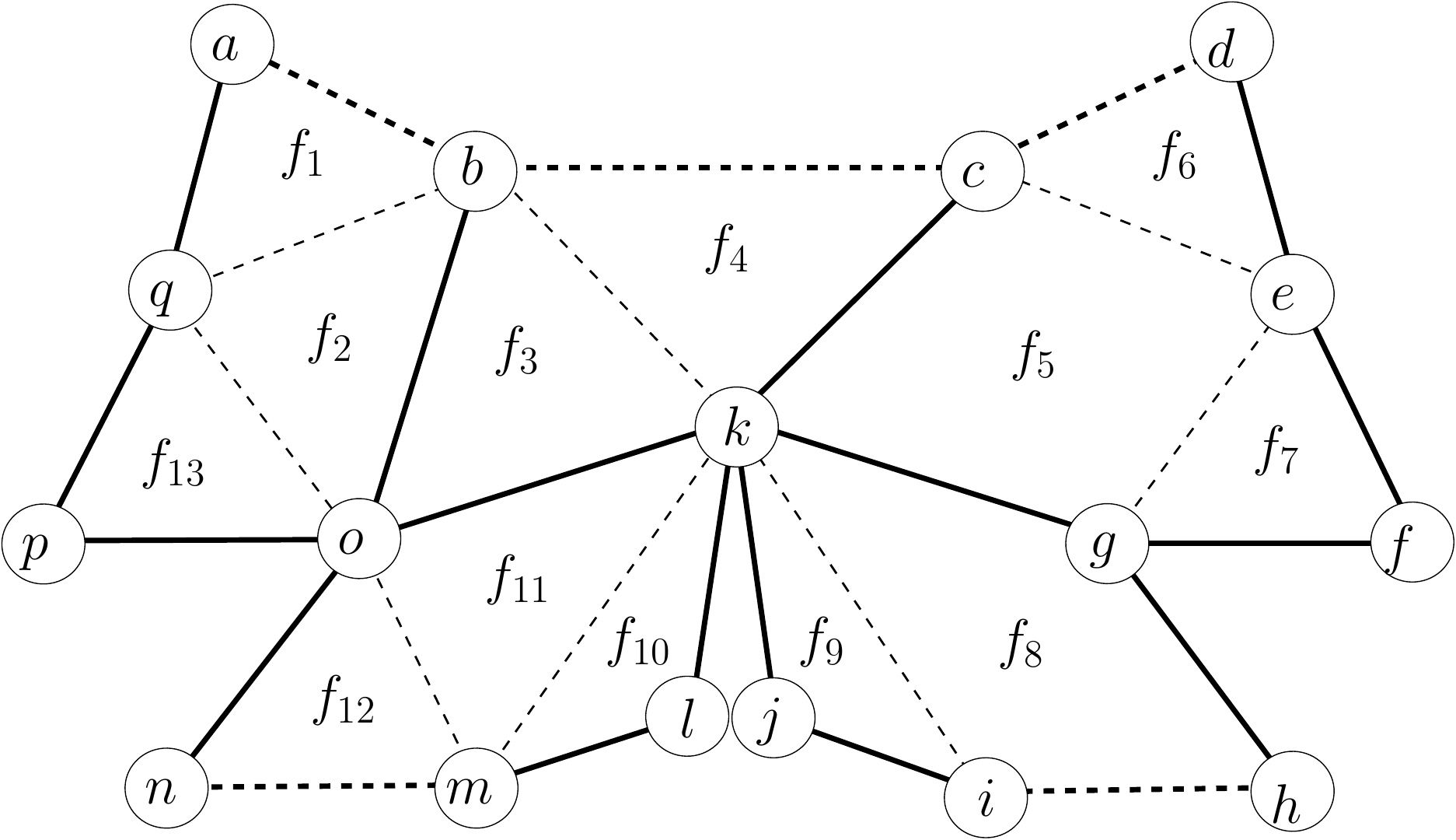}
\caption{For the outerplanar graph $G$ shown, an arbitrary spanning tree $T$ of $G$ is presented in thick edges. Dotted edges are non-tree edges of $T$. Dotted edges shown in thick are non-tree edges of $T$ that are external in $G$.}
\label{FigPartition}
\end{figure*}


We strengthen the result of Lemma \ref{treeSpannersProp} in the lemma below, which establishes a necessary and sufficient condition for the existence of a tree $t$-spanner in outerplanar graphs. 

\begin{lemma}
\label{Lemma_StretchAtExternalEdge}
Let $T$ be a spanning tree of $G$. $T$ is a tree $t$-spanner if and only if  for every external non-tree edge $(u,v)$, $d_{T}(u,v) \leq t$.
\end{lemma}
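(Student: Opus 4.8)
The plan is to establish both directions using Lemma~\ref{treeSpannersProp} together with the decomposition in Lemma~\ref{LemmaPartitionNonTreeEdges}. The forward direction is immediate: if $T$ is a tree $t$-spanner, then by Lemma~\ref{treeSpannersProp} every non-tree edge of $T$ has stretch at most $t$, and in particular every external non-tree edge $(u,v)$ satisfies $d_T(u,v)\le t$.

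For the reverse direction I would assume that $d_T(u,v)\le t$ holds for every external non-tree edge $(u,v)$ and aim to show $\str(e)\le t$ for \emph{every} non-tree edge $e$ of $T$; Lemma~\ref{treeSpannersProp} then finishes the argument. Let $\{C_1,\dots,C_k\}$ be the external fundamental cycles of $T$, let $e_i$ be the external non-tree edge defining $C_i$, and set $G_i = G[V(C_i)]$ as in Lemma~\ref{LemmaPartitionNonTreeEdges}. The heart of the proof is a structural observation about each $G_i$: since $C_i = P_T(e_i) + e_i$, the vertex set $V(C_i)$ is exactly the set of vertices lying on the tree path $P_T(e_i)$. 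Hence $T[V(C_i)]$ contains $P_T(e_i)$, and because $T$ is acyclic it can contain no further edge, so $T[V(C_i)]$ is precisely the path $P_T(e_i)$, a Hamiltonian path of $G_i$.

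Writing this path as $v_0 v_1 \cdots v_p$ with $e_i = (v_0,v_p)$, its length equals $p = d_T(v_0,v_p) = \str(e_i) \le t$ by the hypothesis. Any non-tree edge of $T$ lying in $G_i$ joins two vertices $v_a,v_b$ of this path, and since $P_T(e_i)$ is a connected subtree of $T$ the distance $d_T(v_a,v_b)$ is realized inside it, giving $d_T(v_a,v_b) = |a-b| \le p \le t$. Thus every non-tree edge of $T$ in $G_i$ has stretch at most $t$. By part~(b) of Lemma~\ref{LemmaPartitionNonTreeEdges}, the non-tree edges of $T$ in $G$ are partitioned among $G_1,\dots,G_k$, so every non-tree edge of $T$ has stretch at most $t$, completing the reverse direction.

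I expect the main obstacle to be the structural observation itself---justifying cleanly that restricting $T$ to the vertices of a single external fundamental cycle yields exactly a Hamiltonian path of $G_i$, and that the tree distance between two vertices of $G_i$ is attained within this path rather than wandering through the rest of $T$. Once this is secured, the bound $d_T(v_a,v_b)=|a-b|\le p=\str(e_i)$ reduces the stretch of an arbitrary (possibly internal) non-tree edge to that of the external edge $e_i$, which is assumed to be at most $t$; everything else is bookkeeping supplied by Lemma~\ref{LemmaPartitionNonTreeEdges}.
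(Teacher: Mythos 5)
Your proof is correct and takes essentially the same route as the paper's: the forward direction via Lemma~\ref{treeSpannersProp}, and the reverse direction by using Lemma~\ref{LemmaPartitionNonTreeEdges}~b to assign every non-tree edge to a unique $G_i = G[V(C_i)]$ and bounding its stretch through the Hamilton path $P_T(e_i)$ of $G_i$. The only difference is one of detail: you justify explicitly that $T[V(C_i)]$ is exactly this Hamiltonian path and that tree distances between its vertices are realized inside it, which the paper compresses into the single clause ``as $P_i$ is a Hamilton path in $G_i$.''
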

\begin{proof}
($\Rightarrow$) This follows from Lemma \ref{treeSpannersProp}.

($\Leftarrow$)  Let $C_1, \ldots, C_k$ be the external fundamental cycles of $T$.
For each $i$, let $C_i$ be an external fundamental cycle formed by a tree path $P_i$ and an external non-tree edge $e_i$, and let $G_i = G[V(C_i)]$.
By Lemma \ref{LemmaPartitionNonTreeEdges}~b, for each non-tree edge $e \in E(G) \setminus E(T)$, there is a unique $i$ such that $e \in E(G_i)$.
For each non-tree edge $e$ in $G_i$, if $e$ is an internal edge in $G$, then $\str(e) < \str(e_i)$ as $P_i$ is a Hamilton path in $G_i$. 
Thereby, for every non-tree edge $e \in E(G) \setminus E(T)$, such that $e$ is internal in $G$, $\str(e) < \str(e_i)$ for some $1 \leq i \leq k$.
Recall that the stretch of a spanning tree is the maximum stretch over all of its non-tree edges.
Therefore, the stretch of $T$ is maximum stretch over all of the non-tree edges of $T$ that are external in $G$.
\qed
\end{proof}

\noindent
For a cycle $C$ in $G$, the \emph{enclosed region} of $C$  is defined as the set of interior faces in $G[V(C)]$ and is denoted by $Enc(C)$. 
For instance, for the cycle  $C$ on the vertices in order $a,b,o,p,q$ shown in  Figure \ref{PlanarPrelims}, $Enc(C)$ is $\{f_1,f_2,f_{13} \}$.

\begin{theorem}
\label{NumPartsEface}
 Let $G$ admit a tree $t$-spanner. Then there exists a tree $t$-spanner $T$ of $G$ that satisfies the following properties:\\
$(P1)$.For every  $E\emph{-}face$ $f$ of $G$, there exists exactly one external edge $e \in E(f)$ such that $e \notin E(T)$.\\
$(P2)$.For every external edge $(u,v) \notin E(T)$, there exists exactly one $E\emph{-}face$ of $G$ in $G[V(C)]$, where $C= P_{T}(u,v) + (u,v)$.
\end{theorem}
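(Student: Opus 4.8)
The plan is to begin with an arbitrary tree $t$-spanner $T$ of $G$ (one exists by hypothesis) and to analyse it through its external fundamental cycles $C_1,\dots,C_k$, with regions $R_i=Enc(C_i)$. By Lemma~\ref{LemmaPartitionNonTreeEdges}(a) these regions partition the interior faces of $G$. The first thing I would record is that each external non-tree edge $e_i$, being an external edge of $G$, bounds exactly one E-face $f_i$, and $f_i\in R_i$ (the interior face incident to $e_i$ lies inside the fundamental cycle of $e_i$). Because the regions $R_i$ are pairwise disjoint, the map $e_i\mapsto f_i$ is injective; this already gives the ``at most one'' half of $(P1)$---an E-face carrying two external non-tree edges would lie in two distinct regions---and shows that $k$ is at most the number of E-faces of $G$. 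A short counting step then reduces the whole theorem to a single statement: properties $(P1)$ and $(P2)$ hold for $T$ if and only if the above injection is onto, i.e. $k$ equals the number of E-faces of $G$ (equivalently, every region contains exactly one E-face).

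So it suffices to produce one tree $t$-spanner for which $k$ equals the number of E-faces of $G$. I would take $T$ to maximise $k$ over all tree $t$-spanners and argue by contradiction: if $k$ is not maximal, the counting above forces some region $R_i$ to contain a second E-face $f'\neq f_i$. Writing $C_i=e_i+P_i$ with $P_i\subseteq T$, I use the fact (from the proof of Lemma~\ref{Lemma_StretchAtExternalEdge}) that $P_i$ is a Hamilton path of $G_i=G[V(C_i)]$; hence $P_i$ is itself a spanning tree of $G_i$ and every chord of $C_i$ is a non-tree edge of $T$. The external (in $G$) edges of $f'$ all lie on $C_i$, and are therefore tree edges in $P_i$. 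I pick one such edge $e'$ and a chord $g$ on the boundary of $f'$ (one exists, since $R_i$ has more than one face), and set $T'=T+g-e'$.

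The verification has three parts. First, $T'$ is a spanning tree: the chord $g=(p_a,p_b)$ has tree path equal to the sub-arc of $P_i$ between its endpoints, and $e'$ lies on that arc, so adding $g$ and deleting $e'$ keeps the graph connected and acyclic. Second, the swap is confined to $G_i$; the external non-tree edges of $T'$ are exactly those of $T$ together with $e'$, and the chord $g$ splits $C_i$ into two cycles $C_i'\ni e_i$ and $C_i''\ni e'$ that are precisely the fundamental cycles of $e_i$ and $e'$ in $T'$. Third---and this is what keeps the stretch under control---$|C_i'|+|C_i''|=|C_i|+2$ with both parts of length at least $3$, so each new fundamental cycle has length at most $|C_i|-1$; since $\str(e_i)=|C_i|-1\le t$ we get $|C_i|\le t+1$, whence both $e_i$ and $e'$ have stretch at most $t$ in $T'$. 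By Lemma~\ref{Lemma_StretchAtExternalEdge} the internal non-tree edges are irrelevant, so $T'$ is again a tree $t$-spanner, now with $k+1$ external non-tree edges, contradicting maximality. This establishes $(P1)$ and $(P2)$.

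The step I expect to be the real work is \emph{not} the stretch bound (which falls out of the elementary fact that a chord cuts a cycle into two shorter cycles), but the structural bookkeeping inside $R_i$: showing $f_i\in R_i$, that the external edges of every other E-face of $R_i$ sit on $C_i$ as tree edges, and that a separating boundary chord $g$ of $f'$ exists. I would derive all three from the weak-dual description of $G_i$---$P_i$ being Hamiltonian makes the chords of $C_i$ exactly the edges of the weak dual, and the unique dual path between the nodes $f_i$ and $f'$ pinpoints the chord to use---after which the connectivity of $T'$, the two-cycle split, and the length estimate are routine.
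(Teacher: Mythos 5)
Your proof is correct and is essentially the paper's own argument in extremal packaging: the paper's proof of Theorem~\ref{NumPartsEface} iteratively performs exactly your swap --- it removes an external tree edge of an offending $E$-face and adds a chord of that face (located via the cut $(T_1,T_2)$ left by the deleted edge rather than via the weak dual), and it controls the stretch by the same observation that the swap splits one external fundamental cycle into two strictly shorter ones, invoking Lemma~\ref{Lemma_StretchAtExternalEdge} so that internal non-tree edges never matter. The one delicate point --- that $g$ must be the chord of $f'$ separating it from the face $f_i$ carrying $e_i$, since for an arbitrary boundary chord of $f'$ the edge $e'$ need not lie on the tree path of $g$ and $T+g-e'$ can be disconnected --- is resolved by the dual-path choice in your last paragraph, so there is no gap.
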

\begin{proof}
Let $T'$ be an arbitrary tree $t$-spanner of $G$. We obtain a tree $t$-spanner $T$ that satisfies $P1$ by transforming $T'$.
 If $|Int\emph{-}faces(G)|=1$, then $T'$ satisfies $P1$. So let us consider the case where $|Int\emph{-}faces(G)| \geq 2$.
Let $f \in E\emph{-}faces(G)$. If two or more external edges of $f$ are not present in $T'$, then it follows that $T'$ is not connected which leads to a contradiction that $T'$ is indeed connected. Suppose there exists a face $f \in E\emph{-}faces(G)$, such that all the external edges in $E(f)$  are present in $E(T')$. 
Let $e \in E(f)$ be an external edge.
Let $(T_1,T_2)$ be the cut in $G$ obtained by the removal of the edge $e$ from $T'$. Since $G$ is outerplanar and 2-vertex-connected, the number of edges in the face $f$ that crosses the cut $(T_1,T_2)$ is exactly two.  One such edge is $e$ and let $e'$ be the other edge. 
Clearly, $T' + e'- e$ is a spanning tree of $G$ and we show that its stretch is not more than that of $T'$. Let $C$ be an external fundamental cycle of $T'$ such that the face $f$ is contained in $Enc(C)$. 
The existence of $C$ is clear from Lemma \ref{LemmaPartitionNonTreeEdges}~a.
Let the cycle $C$  be formed with an external non-tree edge $(x,y)$ and the path $P_{T'}(x,y)$. 
Adding $e'$ to $T'$ and removing $e$ from $T'$ is equivalent to splitting the external fundamental cycle $C$ into two external fundamental cycles $C_1$ and $C_2$, so that $Enc(C_1)$ is $\{f\}$ and $Enc(C_2)$ is $Enc(C) \setminus \{f\}$. An example of this process is shown in Figure \ref{canonicalFigure}. 
It follows that the set of external fundamental cycles of $T'$ other than $C$, remain same in $T' + e'- e$.
Note that external non-tree edges associated with external fundamental cycles $C_1$ and $C_2$ are $e$ and $(x,y)$, respectively.
The stretch of $e$ and $(x,y)$ is not more than $t$, as $G[V(C)]$ is outerplanar and the interior faces in $Enc(C)$ are distributed to $Enc(C_1)$ and $Enc(C_2)$.
 From Lemma \ref{Lemma_StretchAtExternalEdge}, stretch of $T'+e'-e$ is not more than the stretch of $T'$.
By repeated application of the above process for at most $|E\emph{-}faces(G)|$ times, we obtain a tree $t$-spanner $T$ that satisfies $P1$.

We now prove that $T$ satisfies $P2$.
Let $(u,v) \in E(G)\setminus E(T)$ be an external edge and $e=(u,v)$. Let $C_{e} = P_{T}(u,v) + (u,v)$ and $H=G[V(C_{e})]$.
 If $H$ contains no $E\emph{-}face$ of $G$, then it implies that $e$ is an internal edge, a contradiction to the premise that $e$ is an external edge. Assume that $H$ contains at least two $E\emph{-}faces$ $f$ and $f'$ of $G$. Without loss of generality, let $E(f)$ contain $e$. Then  all the external edges of $f'$ are in $T$, a contradiction to the fact that $T'$ satisfies $P1$. Hence the lemma is proved.
 \qed
\end{proof}

\begin{figure*}[htp!]
\centering
\includegraphics[scale=0.7]{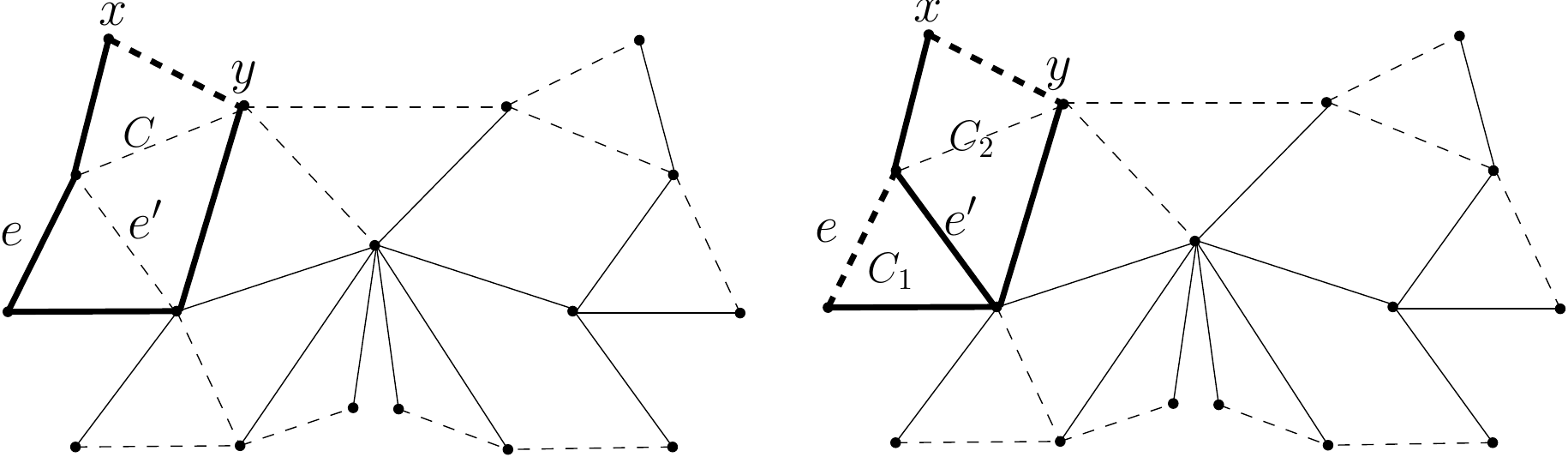}
\caption{(Left) An outerplanar graph $G$ and a non-canonical tree 4-spanner of $G$. (Right) An outerplanar graph $G$ and a canonical tree 4-spanner of $G$. (Dotted edges represent the non-tree edges and the rest of the edges are tree edges)}
\label{canonicalFigure}
\end{figure*}

\noindent
A tree $t$-spanner of $G$ that satisfies the properties $P1$ and $P2$ of Theorem \ref{NumPartsEface} is called a \emph{canonical} tree $t$-spanner. From Theorem \ref{NumPartsEface}, it is clear that the number of external edges missing in a canonical tree $t$-spanner of $G$ is equal to the number of $E\emph{-}faces$ of $G$.

\section{Tree $t$-spanner via Tree S-partition}
\label{treeSpannersSpartitionSection}
In this section, we present a linear-time reduction from \textsc{Tree $t$-spanner} in an outerplanar graph $G$  to \textsc{tree S-partition} in $\tilde{T}$, where  $\tilde{T}$ is a weak-dual of $G$. Further, we present a linear-time algorithm to construct a tree $t$-spanner of $G$ from a tree S-partition of $\tilde{T}$.
To establish the connection between these two problems, we need the following lemmas.
\begin{lemma} (Lemma 2.4 in \cite{coloringOuterplanar})
\label{outerplanarWeakDualTree}
 An outerplanar graph $G$ is $2$-connected if and only if the weak dual of $G$ is a tree.
\end{lemma}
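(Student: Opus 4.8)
The plan is to work directly with the combinatorial structure of the outerplane embedding together with Euler's formula. Recall that the weak dual has one node per interior face of $G$, and two such nodes are adjacent precisely when their faces share an edge of $G$. Since $G$ is $2$-connected and outerplanar, every internal edge lies on the boundary of exactly two interior faces, while every external edge lies on the boundary of $f_{ext}$ and exactly one interior face; hence each internal edge contributes exactly one edge to the weak dual and each external edge contributes none. I would first invoke the unique outerplane embedding \cite{Syslo82UniueEmbedding} to pin down the picture: the external edges form a single Hamiltonian cycle bounding a topological disk, and the internal edges are chords that subdivide this disk into the interior faces.

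For the forward direction ($2$-connected $\Rightarrow$ tree) I would show the weak dual is connected and then count. Connectivity follows from the topology of the disk: given two interior faces, pick interior points $p$ and $q$ in them and a simple arc from $p$ to $q$ inside the open disk that avoids all vertices; the sequence of chords this arc crosses yields a walk between the two faces in the weak dual. For the count, let the outer cycle have length $n$ and let $c$ be the number of chords, so $|E(G)| = n + c$. Euler's formula $n - (n+c) + F = 2$ gives $F = c + 2$ total faces, i.e. $c + 1$ interior faces. Thus the weak dual has $c+1$ nodes and, as noted, exactly $c$ edges, one per chord. A connected graph on $c+1$ vertices with $c$ edges is a tree, which settles this direction. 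Acyclicity can also be seen directly: a cycle in the weak dual would be a cyclic chain of interior faces enclosing a nonempty bounded region, which would contain a vertex of $G$ separated from $f_{ext}$, contradicting outerplanarity.

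For the converse I would argue the contrapositive. If $G$ is connected but not $2$-connected it has a cut vertex $v$, and the unique embedding splits $G$ into pieces meeting only at $v$. Interior faces belonging to different pieces share no edge of $G$, so the corresponding nodes lie in different components of the weak dual; whenever two distinct pieces each carry a bounded face, the weak dual is disconnected and hence not a tree.

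I expect the main obstacle to be the topological rigor rather than the counting: converting the statements ``the disk is connected'' and ``a weak-dual cycle encloses a vertex'' into clean combinatorial claims, for which I would lean on the uniqueness of the outerplane embedding. A secondary point of care is the converse's degenerate cases, since a block that is itself a tree contributes no weak-dual node; the statement is cleanest when $G$ has at least two nontrivial blocks, and it is precisely these boundary cases, not the core argument, where I anticipate needing to be most careful.
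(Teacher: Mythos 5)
The paper never actually proves this lemma---it is imported by citation as Lemma 2.4 of \cite{coloringOuterplanar}---so there is no in-paper argument to compare against; the only question is whether your proof stands on its own. Your forward direction does: with the unique outerplane embedding, Euler's formula gives $c+1$ interior faces when there are $c$ chords, the weak dual has at most $c$ edges (each chord lies on exactly two interior faces, each external edge on only one), and your arc argument gives connectivity; a connected graph on $c+1$ vertices with at most $c$ edges must be a tree. I would phrase the edge count as ``at most $c$'' rather than ``exactly $c$, one per chord'': a priori two interior faces could share two chords, collapsing two chords onto one dual edge, but the connectivity lower bound of $c$ edges rules this out---so your count even yields, for free, that two interior faces share at most one edge, a fact the paper's reduction silently uses.

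The converse, however, has a genuine gap, and it is exactly the ``degenerate case'' you deferred: it cannot be patched as the statement is literally written. Your contrapositive produces a disconnected weak dual only when at least two of the pieces at a cut vertex carry a bounded face. If all bounded faces lie in a single block, the conclusion fails outright: a triangle with one pendant edge is outerplanar and not $2$-connected, yet its weak dual is a single vertex, which is a tree. So the direction ``weak dual is a tree $\Rightarrow$ $G$ is $2$-connected'' is false for graphs with bridges, and any correct proof must add a hypothesis such as bridgelessness (every edge lies on a cycle, hence on an interior face); under that hypothesis every block contains an interior face, interior faces in distinct blocks share no edge of $G$, and your cut-vertex argument closes the proof. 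The restriction is harmless for the present paper, which applies the lemma only to $2$-connected graphs and to unions of closed interior faces of such graphs (Lemma~\ref{subgraphOuterPlanar}), both bridgeless; but as a free-standing proof of the stated equivalence, yours---and indeed the statement itself---needs that extra hypothesis made explicit.
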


\begin{lemma}
 (Lemmas 5 and 6 in \cite{MCBOuterPlanar98})
 \label{opH2_connected}
An outerplanar graph is Hamiltonian if and only if it is 2-connected. Moreover, every 2-connected outerplanar graph has a unique Hamiltonian cycle.
\end{lemma}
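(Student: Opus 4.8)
The plan is to prove the biconditional in its two natural directions and then settle uniqueness separately, with the uniqueness argument being the substantive part. For the forward implication (Hamiltonian $\Rightarrow$ $2$-connected) I would use a purely combinatorial fact that needs no outerplanarity: a graph on at least three vertices containing a Hamiltonian cycle $H$ is $2$-connected, since $H$ supplies two internally vertex-disjoint paths between any pair of vertices, so deleting a single vertex cannot disconnect the graph. Hence no cut vertex exists and $G$ is $2$-connected.

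For the converse ($2$-connected $\Rightarrow$ Hamiltonian) I would pass to the fixed outerplane embedding of $G$ and invoke the standard fact that in a $2$-connected plane graph the boundary of every face is a cycle (with no repeated vertex). In particular the boundary $\partial f_{ext}$ of the exterior face is a cycle $C$. By the definition of an outerplanar embedding every vertex of $G$ lies on $\partial f_{ext}$, so $C$ passes through all $n$ vertices; that is, $C$ is a Hamiltonian cycle. This $C$ is precisely the cycle formed by the external edges of $G$, and it is the natural candidate for the unique Hamiltonian cycle.

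It remains to prove uniqueness, which I expect to be the main obstacle. Fix the embedding, let $C = v_1 v_2 \cdots v_n v_1$ be the exterior cycle with the vertices drawn in convex position so that every internal (non-external) edge is a chord inside $C$, and let $H$ be an arbitrary Hamiltonian cycle. I claim $H$ uses no internal edge, whence $H$ lies in the external edges and therefore $H = C$. Suppose instead that $H$ uses a chord $e = v_i v_j$. Since $e$ is internal, $v_i$ and $v_j$ are non-consecutive on $C$, so $e$ splits the interior disk into two regions carrying the nonempty vertex arcs $A = \{v_{i+1}, \ldots, v_{j-1}\}$ and $B = \{v_{j+1}, \ldots, v_{i-1}\}$ (indices mod $n$). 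Because the embedding is planar no edge of $G$ crosses $e$, so every neighbor of an $A$-vertex lies in $S := A \cup \{v_i, v_j\}$, and symmetrically for $B$.

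I would then analyze the subgraph of $H$ induced on $S$: every vertex of $A$ has both $H$-edges inside $S$ and so has degree $2$ there, while $v_i$ and $v_j$ each contribute the edge $e$ plus at most one further $H$-edge that may leave $S$ into $B$. A short case check on where these second edges at $v_i,v_j$ point yields a contradiction in every case. If both second edges stay in $S$ (into $A$), the $S$-edges form a $2$-regular subgraph, hence a cycle on the proper subset $S \subsetneq V(G)$, contradicting that $H$ is one spanning cycle; if both go to $B$, the $A$-vertices are mutually $H$-adjacent and close up a cycle on $A \subsetneq V(G)$, the same contradiction; and if exactly one goes to $B$, the degree sum inside $S$ equals $2|A|+3$, which is odd, violating the handshake lemma. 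Thus $H$ uses no chord, so $H$ consists of external edges only, and since these form exactly the single $n$-cycle $C$ we conclude $H = C$. The crux is this planarity-plus-parity argument that a chord can never belong to a spanning cycle of a convex-position outerplanar graph; the two implications of the equivalence are essentially standard.
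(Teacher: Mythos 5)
Your proof is correct, but there is nothing in the paper to compare it against: the paper does not prove this lemma at all, it simply quotes it as Lemmas 5 and 6 of the cited reference \cite{MCBOuterPlanar98}. Taken on its own terms, your argument is sound and complete. The forward direction (a spanning cycle yields two internally disjoint paths between any two vertices, hence no cut vertex) and the reverse direction (in a $2$-connected plane graph every face boundary is a cycle, and in an outerplane embedding the exterior face boundary visits all vertices) are both standard and correctly invoked. The uniqueness argument, which is indeed the substantive part, also checks out: if a Hamiltonian cycle $H$ used a chord $e=v_iv_j$, planarity forces all neighbors of the arc $A$ to lie in $S=A\cup\{v_i,v_j\}$, and your three-way case analysis is exhaustive --- when both second $H$-edges at $v_i,v_j$ stay in $S$ (or both leave to $B$), you get a $2$-regular subgraph of $H$ on a proper subset of $V(G)$, i.e.\ a cycle strictly contained in the single cycle $H$, which is impossible; and in the mixed case the degree sum $2|A|+3$ inside $S$ is odd, violating the handshake lemma. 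One step you should make explicit rather than assume: the claim that every internal edge is a chord drawn \emph{inside} the exterior cycle needs a one-line justification from the embedding (an edge drawn in the exterior region would separate some vertices from the exterior face, contradicting outerplanarity); with that noted, your proof is a correct, self-contained, elementary derivation of a fact the paper only imports from the literature.
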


\subsection{Linear-time Reduction}
\noindent \textbf{Reduction.}
Reduction from Tree $t$-spanner in outerplanar graphs to Tree S-partition is as follows.
\label{OuterPlanarreductionSubSection}
 For a given instance $\langle G, t \rangle$ of \textsc{tree $t$-spanner}, an instance $\langle \tilde{T},w,S,t-1 \rangle$ of  \textsc{tree S-partition}   is constructed as follows: $V(\tilde{T}) = \{v_f \mid f \in Int\emph{-}faces(G)\}$, $E(\tilde{T}) = \{ (v_f, v_g) \mid f,g \in Int\emph{-}faces(G), |E(f)\cap E(g)| = 1 \}$. The weight function $w: V(\tilde{T}) \to \mathbb{N}$ is defined as  $ w(v_f) = |E(f)|-2$, for each $v_f \in V(\tilde{T})$. The set of special vertices $S$ is $\{ v_{f} \mid f \in $ $E\emph{-}faces(G) \}$. The graph $\tilde{T}$ is the \emph{weak dual} of $G$ and it is a tree due to Lemma \ref{outerplanarWeakDualTree}. 

To show that the described transformation takes linear time, we first describe an approach to find interior faces in outerplanar graphs with the help of a minimum cycle basis.
We refer the reader to \cite{MCBOuterPlanar98} for details on cycle basis.
A minimal set $\mathcal{B}$ of cycles is a \emph{cycle basis} of $G$, if every cycle in $G$ can be expressed as exclusive-or sum of a subset of cycles in $\mathcal{B}$. 
The length of a cycle basis is sum-total length of cycles in the cycle basis.
A cycle basis with minimum length is a \emph{minimum cycle basis}.  Outerplanar graphs have a unique minimum cycle basis and this can be obtained in linear time.

\begin{lemma}
\label{LemmaMCBop}
\cite{MCBOuterPlanar98}
 For an unweighted outerplanar graph on $m$ edges and $n$ vertices, there is a unique minimum cycle basis of length $2m-n$, which can be obtained in linear time.
\end{lemma}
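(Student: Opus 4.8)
The plan is to show that the unique minimum cycle basis of $G$ is precisely the collection $\mathcal{F} = \{\partial f : f \text{ is an interior face of } G\}$ of boundaries of the interior faces, and then to argue that $\mathcal{F}$ can be listed in linear time. The argument splits into four parts: that $\mathcal{F}$ is a cycle basis at all; that its total length equals $2m-n$; that no cycle basis is shorter and that $\mathcal{F}$ is the only basis of minimum length; and that $\mathcal{F}$ is computable in linear time from the outerplane embedding.

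That $\mathcal{F}$ is a basis is routine planar-duality bookkeeping. Since $G$ is $2$-connected, the boundary of each interior face is a cycle. By Euler's formula the number of interior (bounded) faces is $m-n+1$, which is exactly the dimension of the cycle space of the connected graph $G$, and the standard fact that the bounded-face boundaries of a $2$-connected plane graph are linearly independent over $\mathrm{GF}(2)$ shows $\mathcal{F}$ is a basis. For the length, recall from Lemma \ref{opH2_connected} that the exterior face of $G$ is bounded by a Hamiltonian cycle, so $G$ has exactly $n$ external edges and hence $m-n$ internal edges. Each internal edge lies on exactly two interior faces and each external edge on exactly one, so summing $|E(f)|$ over all interior faces double-counts every internal edge; the total length of $\mathcal{F}$ is therefore $2(m-n)+n = 2m-n$.

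For minimality and uniqueness I would work in the weak dual tree $\tilde{T}$ of $G$ (a tree by Lemma \ref{outerplanarWeakDualTree}). Every cycle $D$ of $G$ encloses a set $F$ of interior faces which, being the faces of a disc, induces a connected subtree of $\tilde{T}$, say on $k$ vertices, and $D = \bigoplus_{f \in F} \partial f$. Counting edges as above, the $k-1$ internal edges joining faces of $F$ cancel in the symmetric difference, so $|D| = \sum_{f \in F} |E(f)| - 2(k-1)$. Since every face boundary has length at least $3$, for any single $f_0 \in F$ this gives $|D| \geq |E(f_0)| + (k-1)$, with strict inequality whenever $k \geq 2$; note that $D \notin \mathcal{F}$ forces $k \geq 2$, since a cycle enclosing a single face is that face's boundary. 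Viewing the cycle space as a $\mathrm{GF}(2)$-vector matroid and a cycle basis as a basis of that matroid, the fundamental circuit of any cycle $D \notin \mathcal{F}$ consists of $D$ together with the $\partial f$, $f \in F$. The inequality $|E(f_0)| \leq |D|$ is exactly the exchange condition that certifies $\mathcal{F}$ has minimum weight, and its strictness for every non-face cycle certifies that $\mathcal{F}$ is the \emph{unique} minimum-weight basis.

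Finally, the outerplane embedding of $G$ can be computed in linear time, and from it the interior faces and their boundaries can be enumerated in linear time; since $m \leq 2n-3$ for outerplanar graphs, the output size $2m-n$ is $O(n)$, so the whole computation is linear. I expect the main obstacle to be the minimality-and-uniqueness step: the basis property and the length are pure Euler-formula and embedding bookkeeping, whereas here one must verify carefully that every cycle decomposes as the symmetric difference of exactly the face boundaries it encloses, and then apply the matroid exchange characterization in its sharp form, so that the strict inequality for $k \geq 2$ yields uniqueness and not merely minimality.
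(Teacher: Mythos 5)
The paper contains no proof of this lemma to compare against: it is imported as a black box from \cite{MCBOuterPlanar98}, and the paper only ever applies it to $2$-connected outerplanar graphs (its standing assumption). Judged on its own, your reconstruction is sound and follows the natural route: interior face boundaries form a basis (Euler's formula plus independence of bounded-face boundaries), the length count $2(m-n)+n=2m-n$ via the Hamiltonian exterior cycle of Lemma~\ref{opH2_connected}, the identity $|D|=\sum_{f\in F}|E(f)|-2(k-1)$ for any cycle $D$ enclosing $k$ faces, and the strict exchange inequality $|E(f_0)|<|D|$ for minimality and uniqueness. It is worth noting that your key length identity is precisely what the paper proves independently, for its own purposes, as Lemmas~\ref{longestCycle} and~\ref{longestCycleEquation} --- but via Euler's formula rather than your cancellation count in the weak dual; using the Euler-formula route would let you avoid one unstated assumption (see below).

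Three points you should make explicit to close the argument. First, your proof genuinely needs $2$-connectivity (Hamiltonian exterior cycle, weak dual a tree by Lemma~\ref{outerplanarWeakDualTree}, exactly $n$ external edges); as literally stated for arbitrary outerplanar graphs the lemma is false (a tree on $n\geq 3$ vertices has an empty cycle basis, not one of length $2m-n=n-2$), so you must state the restriction, which matches how the paper uses the result. Second, the step you yourself flag as the crux --- that non-strict exchange inequalities over all fundamental circuits certify a minimum-weight basis, and strictness certifies uniqueness --- is a real theorem about matroid bases, not a tautology: it follows from the basis-exchange bijection (for bases $B\neq B'$ there is a bijection $\phi\colon B'\setminus B\to B\setminus B'$ with $\phi(e)\in C(e,B)$, whence $w(B')-w(B)=\sum_{e}\bigl(w(e)-w(\phi(e))\bigr)>0$); a complete write-up needs this short argument or a citation, since you currently assert it. Third, your cancellation count silently assumes that two adjacent interior faces of an outerplanar graph share exactly one edge (the same assumption built into the paper's definition of $E(\tilde{T})$); this is true --- two faces sharing two or more edges would yield a $K_{2,3}$ minor --- but it deserves a sentence, or can be bypassed entirely by deriving the length identity from Euler's formula as in Lemma~\ref{longestCycle}.
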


\begin{lemma}
\label{Lemma_faceOuterPlanar}
The set of interior faces of a 2-connected outerplanar graph can be obtained in linear time.
\end{lemma}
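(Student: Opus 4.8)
The plan is to show that the interior faces of the 2-connected outerplanar graph $G$ are in one-to-one correspondence with the cycles of its unique minimum cycle basis, and then to invoke Lemma \ref{LemmaMCBop} to extract them in linear time. Throughout, let $G$ have $n$ vertices and $m$ edges.

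First I would recall the standard fact that in any 2-connected plane graph the boundary of each interior (bounded) face is a cycle, and that the collection of all interior face boundaries forms a cycle basis of $G$. Their number equals the cyclomatic number $m-n+1$: by Euler's formula $G$ has $m-n+2$ faces, exactly one of which is the exterior face, and these face boundaries are independent and span the cycle space. Hence the interior face boundaries constitute one candidate for a minimum cycle basis, and it remains to identify it with the unique minimum cycle basis of Lemma \ref{LemmaMCBop} via a length computation.

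Next I would pin down the length of this face-boundary basis. In a 2-connected plane graph every edge lies on exactly two face boundaries, so the sum of the lengths of all face boundaries, including the exterior one, equals $2m$. In the outerplanar embedding the exterior face is bounded precisely by the unique Hamiltonian cycle guaranteed by Lemma \ref{opH2_connected}, which passes through all $n$ vertices and therefore has length $n$. Subtracting, the interior face boundaries have total length $2m-n$, which is exactly the length of the minimum cycle basis asserted in Lemma \ref{LemmaMCBop}. Since that minimum cycle basis is unique and of length $2m-n$, the set of interior face boundaries must coincide with it, so each cycle returned by the minimum-cycle-basis computation is the boundary of exactly one interior face and conversely.

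Finally, because $G$ is outerplanar we have $m \leq 2n-3$, so the minimum cycle basis, and hence the list of interior faces together with their vertex and edge sets, has total size $O(n)$ and can be produced in linear time by Lemma \ref{LemmaMCBop}; reading off each returned cycle yields the corresponding interior face within the same time bound. The main obstacle is cleanly justifying that the interior face boundaries form an independent spanning set of the cycle space and carrying out the length bookkeeping so that uniqueness forces the identification of minimum-cycle-basis cycles with interior faces; once this correspondence is established, the linear-time extraction is immediate.
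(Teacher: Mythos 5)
Your proof is correct and takes essentially the same approach as the paper: both identify the set of interior face boundaries with the unique minimum cycle basis of length $2m-n$ and then invoke Lemma \ref{LemmaMCBop} to extract it in linear time. The only difference is that you spell out the bookkeeping (each edge lies on two face boundaries, the exterior boundary is the Hamiltonian cycle of length $n$ from Lemma \ref{opH2_connected}) that the paper compresses into a single ``we observe'' sentence.
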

\begin{proof}
Let $G$ be an outerplanar graph on $m$ edges and $n$ vertices.
We  observe that the set of interior faces of $G$ is a cycle basis of length $2m-n$. Therefore, by Lemma \ref{LemmaMCBop}, we conclude that obtaining the set of interior faces of an outerplanar graph takes linear time.
\qed
\end{proof}

\begin{lemma}
\label{Lemma_runningTime}
Given an instance $\langle G, t\rangle$ of \textsc{tree $t$-spanner}, the transformed instance $\langle \tilde{T}, w, S, t-1 \rangle$ of \textsc{tree S-partition} can be obtained in linear time.
\end{lemma}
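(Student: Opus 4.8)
The plan is to construct each of the four components of the instance $\langle \tilde{T}, w, S, t-1\rangle$ separately, charging all the work against the total length of the interior-face boundaries. By Lemma~\ref{LemmaMCBop} this total length equals $2m-n$, and since an outerplanar graph satisfies $m = O(n)$, every stage below runs in $O(n)$ time.

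First I would apply Lemma~\ref{Lemma_faceOuterPlanar} to obtain $Int\emph{-}faces(G)$ in linear time, with each face $f$ represented by the list of its boundary edges $E(f)$. This immediately yields the vertex set $V(\tilde{T}) = \{v_f \mid f \in Int\emph{-}faces(G)\}$ together with each cardinality $|E(f)|$, so the weights $w(v_f) = |E(f)| - 2$ are read off in time $\sum_{f} |E(f)| = 2m-n = O(n)$.

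While enumerating the faces I would also record, for every edge $e \in E(G)$, the list of interior faces whose boundary contains $e$; the total length of these lists is again $\sum_{f}|E(f)| = O(n)$. The structural fact that $G$ is $2$-connected and outerplanar ensures that each internal edge lies on exactly two interior faces and each external edge on exactly one, so a single scan of the incidence lists classifies every edge as internal or external in $O(n)$ time. The special set $S = \{v_f \mid f \in E\emph{-}faces(G)\}$ then follows by marking, for each external edge, the unique interior face containing it: a face belongs to $E\emph{-}faces(G)$ precisely when it carries at least one external edge. The edge set $E(\tilde{T})$ is produced by one pass over the internal edges, emitting for each internal edge $e$ the pair $(v_f, v_g)$ of interior faces sharing $e$ named by its incidence list; these are exactly the pairs with $|E(f) \cap E(g)| = 1$.

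The one point that requires care is that the adjacency condition $|E(f)\cap E(g)| = 1$ defining $E(\tilde{T})$ must not be tested by examining all pairs of faces, which would be quadratic in the number of $E\emph{-}faces$ and $I\emph{-}faces$. Routing the computation through the edge-to-face incidence lists avoids this: each internal edge contributes exactly one dual edge, so $\tilde{T}$ is assembled in linear time. Summing the costs of the four stages yields the claimed $O(n)$ bound.
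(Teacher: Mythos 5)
Your proposal is correct and follows essentially the same approach as the paper: both build edge-to-face incidence lists from the linear-time face computation of Lemma~\ref{Lemma_faceOuterPlanar}, read off the weights $w(v_f)=|E(f)|-2$, and emit one dual edge per internal edge (an edge lying on two interior faces), avoiding any pairwise face comparison. The only cosmetic difference is the test for $S$: the paper marks $v_f$ special when its degree in $\tilde{T}$ is less than $|E(f)|$, while you directly mark faces containing an edge that lies on only one interior face --- these are trivially equivalent.
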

\begin{proof}
We set up the necessary data-structure for the reduction. 
For an edge $e$ in $G$, let $Faces[e]$ denote the set of at most two interior faces of $G$ that contain $e$.
For an interior face $f$ in $G$, let $Num[f]$ denote the number of edges in $f$. 
For each interior face $f$ in $G$ and for each edge $e \in E(f)$, insert $f$ in the set $Faces[e]$.
Further, assign $Num[f] = |E(f)|$. 
We now construct $\tilde{T},w$ and $S$. For each interior face $f$ in $G$, add a vertex $v_{f}$ in $\tilde{T}$. For each edge $e\in E(G)$ such that $|Faces[e]|$ is two, let $Faces[e] = \{f,g\}$, add an edge $(v_f,v_g)$ in $\tilde{T}$. For each $f \in Int\emph{-}faces(G)$, assign $w(v_f)= Num[f]-2$. A vertex $v_f \in V(\tilde{T})$ is considered to be in $S$ if the degree of $v_f$ is less than $Num[f]$. 

The set up of the data structures $Faces[~]$ and $Num[~]$ takes linear time as every edge appears in at most two interior faces and  $|E(G)| \leq 2|V(G)|-3$ in outerplanar graphs.
Therefore, the construction of $\langle \tilde{T}, w, S, t-1 \rangle$ takes linear time.
\qed
\end{proof}

\subsection{Correctness of Reduction}

We use the following lemmas to prove formally, in Theorem \ref{outerplanarReductionTheorem}, that the described reduction is correct .

\begin{lemma}
\label{subgraphOuterPlanar}
Let $\tilde{T}$ be the weak dual of $G$. 
Let $F' \subset Int\emph{-}faces(G)$, $X = \bigcup_{f \in F'} V(f)$ and $V' = \{ v_{f} ~|~ f \in F' \}$.
The outerplanar graph $G[X]$ is $2$-connected if and only if $\tilde{T}[V']$ is a tree.
\end{lemma}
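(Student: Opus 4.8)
The plan is to route the whole statement through the face--dual correspondence of Lemma~\ref{outerplanarWeakDualTree}, so that the $2$-connectivity of $G[X]$ becomes a purely dual-side condition. Since $\tilde{T}$ is a tree, the induced subgraph $\tilde{T}[V']$ is automatically a forest, so ``$\tilde{T}[V']$ is a tree'' is the same as ``$\tilde{T}[V']$ is connected'', and it suffices to prove that $G[X]$ is $2$-connected iff $\tilde{T}[V']$ is connected. Because $G[X]$ is itself an outerplanar graph, Lemma~\ref{outerplanarWeakDualTree} says $G[X]$ is $2$-connected iff the weak dual of $G[X]$ is a tree. Thus the natural engine is the structural identity: the weak dual of $G[X]$ is exactly $\tilde{T}[V']$; once this identity is in force, the desired equivalence drops out of Lemma~\ref{outerplanarWeakDualTree}.

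To set up that identity I would match $G[X]$ and $\tilde{T}[V']$ on both the vertices and the edges of the dual. The edge side is routine: for $f,g \in F'$ all of $E(f)$ and $E(g)$ already lie in $G[X]$, so $E(f)\cap E(g)$ is unchanged on passing to the induced subgraph, and the dual adjacency is inherited verbatim from $\tilde{T}$, yielding precisely the edges of $\tilde{T}[V']$. One inclusion of the vertex side is equally immediate: every bounding edge of a face $f\in F'$ has both endpoints in $X$, so $f$ survives as an interior face of $G[X]$, giving $F' \subseteq Int\emph{-}faces(G[X])$. The content is therefore concentrated in the opposite inclusion $Int\emph{-}faces(G[X]) \subseteq F'$, i.e.\ the vertex side of the identity.

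The crux---and the step I expect to absorb essentially all of the effort---is ruling out \emph{spurious} interior faces of $G[X]$. The subtlety is that $G[X]$ is vertex-induced, so it may carry an edge of $G$ whose two endpoints happen to lie in $X$ while the edge bounds no face of $F'$; such a chord could slice a face of $F'$ or trap a new interior face outside $F'$ (and indeed, for a \emph{disconnected} $F'$ whose blocks share only a vertex, exactly such a bridging chord can appear). To control this I would invoke the rigid skeleton of Lemma~\ref{opH2_connected}---a unique Hamiltonian cycle whose chords are exactly the internal edges, with each interior face bounded by cycle edges together with chords---and argue by induction on $|F'|$: peel off a leaf $v_f$ of the tree $\tilde{T}[V']$, apply the hypothesis to the connected remainder to obtain $Int\emph{-}faces(G[X''])=F'\setminus\{f\}$, and then glue $f$ back along the single edge $E(f)\cap E(g)$ it shares with its neighbour $g$ in $F'$. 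The delicate point is verifying that this reattachment contributes only the bounding edges of $f$, so that no new diagonal, and hence no new face, is created; this is exactly where the outerplanar structure is indispensable, and since gluing a cycle onto a $2$-connected graph along one edge preserves $2$-connectivity, the induction simultaneously yields $Int\emph{-}faces(G[X])=F'$ and the $2$-connectivity of $G[X]$ when $\tilde{T}[V']$ is connected. For the converse I would argue contrapositively in the same vocabulary: when $\tilde{T}[V']$ is disconnected the blocks of $F'$ share no edge, so a common boundary vertex ought to separate $G[X]$, and it is once more the control on spurious induced chords that must certify no bridging edge repairs this cut. The entire argument therefore hinges on this reattachment analysis, which governs precisely which induced edges---and hence which interior faces---$G[X]$ can possess.
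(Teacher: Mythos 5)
Your high-level route is the same as the paper's: identify $\tilde{T}[V']$ with the weak dual of $G[X]$ and then invoke Lemma~\ref{outerplanarWeakDualTree}. The paper's own proof is exactly this, except that it dismisses the vertex-side identity in one line (``Clearly, the vertex set $V'$ is same as the vertex set of the weak dual of $G[X]$''), whereas you correctly flag this identity, $Int\emph{-}faces(G[X]) = F'$, as the crux. Your forward direction (connected dual $\Rightarrow$ no spurious chords, hence $G[X]$ $2$-connected with face set $F'$) can indeed be completed: in the outerplanar embedding all vertices lie on a circle, the boundary cycle of a dual-connected patch $\bigcup_{f \in F'} f$ visits $X$ in circle order, and a spurious chord joining two non-consecutive vertices of $X$ would leave vertices of $X$ strictly on both sides while the connected patch, whose interior an edge of $G$ cannot cross, lies entirely on one side---a contradiction. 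So that half of your program is sound.

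The genuine gap is the converse, and it is not one you can close, because that direction of the lemma is false as stated. Take $G$ to be the $5$-cycle $a,b,c,d,e$ with chords $(a,c)$ and $(c,e)$, so $Int\emph{-}faces(G) = \{abc,\, cde,\, ace\}$ and $\tilde{T}$ is the path $v_{abc} - v_{ace} - v_{cde}$. For $F' = \{abc,\, cde\}$ we get $X = V(G)$, hence $G[X] = G$ is $2$-connected, yet $\tilde{T}[V']$ is two isolated vertices. This is precisely the configuration of your own parenthetical remark (two blocks of $F'$ meeting only at the vertex $c$, repaired by the induced edges $(a,c), (c,e), (e,a)$ of the face $ace \notin F'$): your observation is not an obstacle to one proof strategy, it is a counterexample to the statement itself, and equally to the paper's ``clearly'' (here the weak dual of $G[X]$ has three vertices, not two). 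Consequently no ``control on spurious chords'' can certify that the cut survives; the correct conclusion is that the lemma needs a stronger hypothesis, e.g.\ that $F'$ equals $Int\emph{-}faces(G[X])$ (is closed under taking interior faces of the induced subgraph). That restricted form is all the paper actually uses: in the necessity part of Theorem~\ref{outerplanarReductionTheorem}, $F' = Enc(C)$ is by definition the full set of interior faces of $G[V(C)]$, and the sufficiency part uses only the true direction (connected dual $\Rightarrow$ $2$-connected), which needs no face identity at all since gluing cycles along shared edges preserves $2$-connectivity. So rather than trying to finish your converse argument, you should either prove the corrected statement or record the counterexample.
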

\begin{proof}
We first prove the claim that $\tilde{T}[V']$ is the weak dual of $G[X]$. Clearly, the vertex set $V'$ is same as the vertex set of the weak dual of $G[X]$. For any two interior faces $f$ and $g$ in $G[X]$, $f$ and $g$ are adjacent in $G$ if and only if they are adjacent in $G[X]$. Hence, $(v_f,v_g)$ is an edge in $\tilde{T}[V']$  if and only if it is an edge in the weak dual of $G[X]$. Hence the claim holds. 
By applying Lemma \ref{outerplanarWeakDualTree}, from the above claim, $G[X]$ is $2$-connected if and only if  $\tilde{T}[V']$ is a tree.
\qed
\end{proof}

The following lemma gives the length of the Hamiltonian cycle in terms of length of interior faces.  

\begin{lemma}
\label{longestCycle}
 Let $l_1, \ldots, l_r$ be the lengths of all the interior faces in $G$. If $C$ is the Hamiltonian cycle in $G$, then $\displaystyle{|C| = (\Sigma_{i=1}^{r}l_{i})-2(r-1)}$.

\end{lemma}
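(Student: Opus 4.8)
The plan is to pin down three quantities---the length $|C|$ of the Hamiltonian cycle, the sum $\sum_{i=1}^{r} l_i$ of the interior-face lengths, and the numbers $n$ of vertices and $m$ of edges of $G$---and then show that the claimed identity is merely a rearrangement of standard edge/vertex/face counts for a $2$-connected outerplanar graph. The only two facts I really need are a clean value for $|C|$ and a clean value for $\sum_i l_i$.

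First I would observe that in a $2$-connected outerplanar graph every vertex lies on the exterior face, and the external edges are exactly the edges of the unique Hamiltonian cycle guaranteed by Lemma \ref{opH2_connected}. Hence $C$ passes through all $n$ vertices, so $|C| = n$ and the number of external edges is $n$.

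Next I would evaluate $\sum_{i=1}^{r} l_i = \sum_{f} |E(f)|$ by double counting incidences between edges and interior faces. Each internal edge of $G$ borders exactly two (distinct) interior faces, while each external edge borders exactly one interior face, its other side being $f_{ext}$. Writing $m_{int}$ and $m_{ext}$ for the numbers of internal and external edges, this gives $\sum_i l_i = 2m_{int} + m_{ext} = 2m - m_{ext} = 2m - n$; equivalently, this matches the observation in the proof of Lemma \ref{Lemma_faceOuterPlanar} that the interior faces form the minimum cycle basis of length $2m - n$ (Lemma \ref{LemmaMCBop}). Applying Euler's formula to the embedding, which has $r$ interior faces and one exterior face, yields $n - m + (r+1) = 2$, i.e. $m = n + r - 1$. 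Substituting, $\sum_i l_i = 2m - n = 2(n + r - 1) - n = n + 2(r-1)$, and since $|C| = n$ we conclude $|C| = \big(\sum_{i=1}^{r} l_i\big) - 2(r-1)$.

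The arithmetic is routine; the only points needing care are the claim $|C| = n$ and the incidence count, both of which rest on the topology of the fixed outerplanar embedding (all vertices on the exterior boundary, external edges forming a single cycle, each internal edge separating two interior faces) rather than on any heavier machinery. If one prefers to avoid Euler's formula entirely, a short induction on $r$ also works: the base case $r = 1$ is immediate since then $C$ is the single interior face, and attaching one further interior face to the current boundary along a single shared (now internal) edge increases $\sum_i l_i$ by that face's length while increasing $|C|$ by that length minus $2$, which is precisely the bookkeeping recorded by the formula.
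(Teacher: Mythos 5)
Your proof is correct and follows essentially the same route as the paper's: both rest on $|C|=n$ for the Hamiltonian cycle, a double count of edge--face incidences (you count only interior faces and note external edges contribute once; the paper counts all faces via $2|E(G)| = \sum_i l_i + l_{ext}$ with $l_{ext}=|V(G)|$, which is the same bookkeeping), and Euler's formula to eliminate $|E(G)|$. The closing induction sketch is a pleasant alternative but not needed.
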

\begin{proof}
Let $l_{ext} = |V(f_{ext})|$. Since $2|E(G)|= l_1 + \ldots + l_r + l_{ext}$, by Euler's planarity formula, we obtain
 $|C| = |V(G)|$ $= |E(G)| -r + 1$ $ = 2|E(G)| - 2r + 2 -|V(G)| $ $= l_1 + \ldots + l_r - 2(r-1) + l_{ext} -|V(G)|$.
Since $G$ is  outerplanar, $l_{ext} = |V(G)|$, thereby $\displaystyle{|C| = (\Sigma_{i=1}^{r}l_{i})-2(r-1)}$.
\qed
\end{proof}

For a subset $X \subseteq V(\tilde{T})$, we define $cost(X) = {\displaystyle \Sigma_{v \in X}w(v)}$.
For a partition $\pi = \{V_1, \ldots, V_k \}$ of $V(\tilde{T})$, $cost(\pi)$ is defined as $ \max \{ cost(V_{i}) \mid V_{i} \in \pi \}$.

\begin{lemma}
 \label{longestCycleEquation}
Let $C$ be a cycle in $G$ and  $V' = \{ v_{f} \in V(\tilde{T}) \mid f \in Enc(C) \}$.
Then $cost(V') = |C|-2$.
 \end{lemma}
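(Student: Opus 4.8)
The plan is to recognize the right-hand side $|C|-2$ as an instance of Lemma~\ref{longestCycle} applied not to $G$ but to the subgraph $H = G[V(C)]$ enclosed by $C$, and then to match the left-hand side $cost(V')$ against the face-length sum appearing there. First I would observe that, because $G$ is $2$-connected outerplanar, the induced subgraph $H = G[V(C)]$ is itself $2$-connected outerplanar with $C$ as its unique Hamiltonian cycle: every edge of $H$ not lying on $C$ is a chord of $C$, so $H$ is a cycle together with chords, and by Lemma~\ref{opH2_connected} its Hamiltonian cycle is unique and is exactly $C$. By the very definition of the enclosed region, $Enc(C)$ is precisely the set of interior faces of $H$; moreover each such face $f$ bounds the same edges in $H$ as in $G$, since all of its bounding edges join vertices of $V(C)$ and hence survive in $H$. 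Consequently the lengths $|E(f)|$ of the faces $f \in Enc(C)$ are exactly the interior-face lengths $l_1,\dots,l_r$ of $H$, where $r = |Enc(C)|$.

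With this identification in place, I would expand the cost using the weight definition $w(v_f)=|E(f)|-2$ from the reduction, namely $cost(V') = \sum_{f\in Enc(C)}\bigl(|E(f)|-2\bigr) = \bigl(\sum_{i=1}^{r} l_i\bigr) - 2r$. Lemma~\ref{longestCycle}, applied to $H$ with Hamiltonian cycle $C$, gives $\sum_{i=1}^{r} l_i = |C| + 2(r-1)$, and substituting this yields $cost(V') = |C| + 2(r-1) - 2r = |C|-2$, as required. The arithmetic collapses cleanly, and the single-face case $r=1$ is consistent, since then $|C| = l_1$ and $cost(V') = l_1 - 2$.

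The only step that needs genuine care—and the one I expect to be the main obstacle—is the structural claim that $H = G[V(C)]$ is $2$-connected outerplanar, that its interior faces are precisely $Enc(C)$, and that each enclosed face keeps the same boundary in $H$ as in $G$. One must rule out that passing to the induced subgraph introduces edges outside the region bounded by $C$ or alters the boundary of any enclosed face; this is exactly where outerplanarity is used, since all vertices lie on the outer face and so every non-$C$ edge among $V(C)$ is a chord of $C$. Once this is granted—and Lemma~\ref{subgraphOuterPlanar} confirms the $2$-connectivity of $H$ via its weak dual being a tree—the remainder is the short computation above.
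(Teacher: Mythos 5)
Your proof is correct and takes essentially the same route as the paper's: expand $cost(V') = \sum_{f \in Enc(C)}\bigl(|E(f)|-2\bigr)$ and invoke Lemma~\ref{longestCycle} to collapse the sum to $|C|-2$. The only difference is that you explicitly justify applying Lemma~\ref{longestCycle} to the subgraph $H = G[V(C)]$ (checking it is $2$-connected outerplanar with unique Hamiltonian cycle $C$ and that the faces of $Enc(C)$ keep their boundaries), a step the paper's terse proof leaves implicit.
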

\begin{proof}
Let $Enc(C) = \{f_1, \ldots, f_r \}$ and $l_i = |E(f_i)|$. By premise, $|V'| = r$. From the transformation, we have
 $cost(V') = {\Sigma_{v \in V'}w(v)} = \Sigma_{i=1}^r (l_{i}-2) = \Sigma_{i=1}^r l_{i}-2(r-1) -2$. By Lemma  \ref{longestCycle}, we have $\Sigma_{i=1}^r l_{i}-2(r-1) -2 = |C| - 2$. Hence, $cost(V') = |C|-2$.
\qed
\end{proof}


\begin{theorem}
\label{outerplanarReductionTheorem}
 $G$ admits a tree $t$-spanner if and only if $\tilde{T}$ has a tree S-partition of cost at most $t-1$. 
  \end{theorem}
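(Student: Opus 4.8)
The plan is to prove the biconditional in two directions, exploiting the canonical tree $t$-spanner from Theorem \ref{NumPartsEface} and the cost formula established in Lemma \ref{longestCycleEquation}. The central idea is a correspondence between the external fundamental cycles of a canonical tree $t$-spanner $T$ and the parts of a tree S-partition of $\tilde{T}$: each external fundamental cycle $C_i$ corresponds to the set $V_i = \{v_f \in V(\tilde{T}) \mid f \in Enc(C_i)\}$, and I will argue that these sets are precisely the parts of a valid tree S-partition with the required cost bound.

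For the forward direction, suppose $G$ admits a tree $t$-spanner. By Theorem \ref{NumPartsEface} there is a canonical tree $t$-spanner $T$ whose external fundamental cycles $C_1,\ldots,C_k$ satisfy $P1$ and $P2$. First I would use Lemma \ref{LemmaPartitionNonTreeEdges}~a to conclude that $Int\emph{-}faces(G) = Int\emph{-}faces(G_1) \uplus \cdots \uplus Int\emph{-}faces(G_k)$, so that the sets $V_i = \{v_f \mid f \in Enc(C_i)\}$ genuinely partition $V(\tilde{T})$. Next, each $G_i = G[V(C_i)]$ is $2$-connected (its boundary $C_i$ is a Hamiltonian cycle of $G_i$ by Lemma \ref{opH2_connected}), so by Lemma \ref{subgraphOuterPlanar} each induced subtree $\tilde{T}[V_i]$ is connected. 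Property $P2$ guarantees that each $Enc(C_i)$ contains exactly one $E\emph{-}face$, which means $|V_i \cap S| = 1$ since $S$ consists precisely of the vertices $v_f$ with $f$ an $E\emph{-}face$. Finally, applying Lemma \ref{longestCycleEquation} to $C = C_i$ gives $cost(V_i) = |C_i| - 2$; because $C_i = P_T(x,y) + (x,y)$ for an external non-tree edge, we have $|C_i| = d_T(x,y) + 1 \le t+1$ by Lemma \ref{Lemma_StretchAtExternalEdge}, so $cost(V_i) \le t-1$. This yields a tree S-partition of cost at most $t-1$.

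For the reverse direction, suppose $\tilde{T}$ has a tree S-partition $\pi = \{V_1,\ldots,V_{|S|}\}$ of cost at most $t-1$. From each part $V_i$ I would recover a subgraph $G_i = G[\bigcup_{v_f \in V_i} V(f)]$; since $\tilde{T}[V_i]$ is connected, Lemma \ref{subgraphOuterPlanar} makes $G_i$ a $2$-connected outerplanar graph whose boundary is a Hamiltonian cycle $C_i$ by Lemma \ref{opH2_connected}. The single special vertex in $V_i \cap S$ identifies a unique $E\emph{-}face$, hence a unique external edge $e_i$ of $G$ lying on $C_i$; I designate $e_i$ the non-tree edge for $C_i$ and take all remaining edges of the $C_i$ as tree edges. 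The key structural point is that these pieces assemble into a spanning tree $T$ of $G$: the internal edges shared between adjacent $G_i$'s (corresponding to tree edges of $\tilde{T}$ between the parts) are retained, so $T$ is connected and acyclic with exactly $|S|$ external non-tree edges. Applying Lemma \ref{longestCycleEquation} in reverse, $cost(V_i) \le t-1$ gives $|C_i| \le t+1$, hence $d_T(u_i,v_i) = |C_i| - 1 \le t$ for each external non-tree edge $e_i = (u_i,v_i)$, and Lemma \ref{Lemma_StretchAtExternalEdge} certifies that $T$ is a tree $t$-spanner.

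The main obstacle I anticipate is the reverse direction's assembly step: verifying rigorously that choosing one external non-tree edge per part, while keeping every shared internal edge, produces a genuine spanning tree rather than a disconnected or cyclic subgraph. This requires a careful Euler-characteristic or edge-counting argument showing that the total number of edges removed equals the number of independent cycles of $G$ (one per $E\emph{-}face$, matching $|S|$), and that connectivity is preserved because adjacent parts are joined along their common internal edge exactly as adjacency in $\tilde{T}$ dictates. The forward direction, by contrast, is largely a bookkeeping consequence of the canonical structure already secured in Theorem \ref{NumPartsEface} together with the cost identity of Lemma \ref{longestCycleEquation}.
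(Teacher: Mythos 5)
Your proposal is correct and follows essentially the same route as the paper: in the forward direction the external fundamental cycles of a canonical tree $t$-spanner give the parts (via Lemma \ref{LemmaPartitionNonTreeEdges}, Lemma \ref{subgraphOuterPlanar}, property $P2$, and the cost identity of Lemma \ref{longestCycleEquation}), and in the reverse direction each part yields a $2$-connected outerplanar piece whose Hamiltonian cycle loses exactly one external edge. The assembly step you flag as the main obstacle is exactly where the paper works too, and it resolves it the way you sketch: it first forms the intermediate graph $G'$ by deleting all within-part internal edges, so that the interior faces of $G'$ are precisely $C_1,\ldots,C_q$ and all are $E$-faces, whence deleting one external edge per face leaves a connected acyclic spanning subgraph.
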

\begin{proof} \textbf{Necessity.}
Let $T$ be a canonical tree $t$-spanner of $G$.
From $T$, we construct a tree S-partition $\pi$ of $\tilde{T}$ such that $cost(\pi) \leq t-1$.
 For each external edge $(u,v)$ such that $(u,v) \in E(G) \setminus E(T)$, let $e=(u,v)$,
 we construct a part $V_{e} \in \pi$ with the desired properties as follows.
Consider the cycle $C = P_{T}(u,v) + (u,v)$ in $G$. Clearly, $|C| \leq t+1$ because $|P_{T}(u,v)| \leq t$.
Note that $G_{e} = G[V(C)]$ is a 2-connected outerplanar graph.
Let $V_{e} = \{ v_{f} ~|~ f \in Int\emph{-}faces(G_{e})\}$.
Now $V_e$ is a part in $\pi$. By Theorem \ref{NumPartsEface}, there is exactly one face $f$ in $G_{e}$ such that $f$ is an $E\emph{-}face$ in $G$. By the transformation, $v_{f} \in V_e$ and since $v_{f} \in S$, we have $|V_{e} \cap S| = 1$.
Moreover, it is easy to see that $\tilde{T}[V_{e}]$ is connected from Lemma \ref{subgraphOuterPlanar}.
From Lemma \ref{longestCycleEquation}, $cost(V_{e}) = |C|-2$. So, $cost(V_{e}) \leq t-1$.
From Theorem \ref{NumPartsEface}, the number of parts in $\pi$ is $|S|$. 
By Lemma \ref{LemmaPartitionNonTreeEdges}~a, $\pi$ is a partition of $V(\tilde{T})$.
Hence $\tilde{T}$ has a tree S-partition of cost at most $t-1$.

\textbf{Sufficiency.}
Let $\pi = \{ V_{1}, V_{2}, \ldots, V_{q}\}$ be a tree S-partition of $\tilde{T}$ such that $cost(\pi) \leq t-1$, where $|S|=q$.
From $\pi$, we construct a tree $t$-spanner of $G$.
For each $1 \leq i \leq q$, let $G_i$ be a 2-connected outerplanar graph, where $V(G_i) = \bigcup_{ v_{f} \in V_{i} } V(f)$ and $E(G_i) = \bigcup_{ v_{f} \in V_{i} } E(f)$. As $G_i$ is a subgraph of $G$, clearly $G_i$ is outerplanar. Further by Lemma \ref{subgraphOuterPlanar}, $G_{i}$ is $2$-connected, because $\tilde{T}[V_{i}]$ is a tree. Thus $G_i$ is a 2-connected outerplanar graph. 
From Lemma \ref{opH2_connected}, $G_{i}$ has a unique Hamilton cycle, say $C_i$.
Let $X_i$ be the set of internal edges in $G_i$.
Because the set of external edges in a 2-connected outerplanar graph forms a Hamilton cycle, $E(G_i)$ is a disjoint union of $E(C_i)$ and $X_i$. 
Since $V_1, \ldots, V_q$ is a partition of $V(\tilde{T})$, observe that $Enc(C_1), \ldots, Enc(C_q)$ is a partition of $Int\emph{-}faces(G)$.
We obtain the 2-connected outerplanar graph $G'$ by removing the edges $X_1 \cup \ldots \cup X_q$ from $G$.
Since $G'$ does not contain internal edges from any $G_i$, we have $I\emph{-}faces(G') = \emptyset$ and $E\emph{-}faces(G')=\{C_1, \ldots, C_q\}$. 
Let $e_i \in E(C_i)$ be an external edge in $G'$.
We remove the set $\{e_1, \ldots, e_q \}$ of edges from $G'$ and obtain $T$.
Since we have deleted exactly one external edge from each $E\emph{-}face$ of $G'$ and there is no $I\emph{-}face$ in $G'$, $T$ is a spanning tree. It remains to show that $T$ is a tree $t$-spanner of $G$.
Note that $cost(V_{i}) \leq  t-1$, because $V_{i} \in \pi$.
From Lemma \ref{longestCycleEquation}, $|C_{i}|=cost(V_{i})+2$, so $|C_{i}| \leq t+1$.
It follows that for each $1 \leq i \leq q$, stretch of $e_i$ in $T$ is at most  $t$. 
As we can observe that $e_1, \ldots, e_q$ are the only non-tree edges of $T$ that are external in $G$, 
from  Lemma \ref{Lemma_StretchAtExternalEdge}, $T$ is a tree $t$-spanner of $G$. 
\qed
\end{proof}


\subsection{Tree $t$-spanner from Tree S-partition in Linear Time}
We now show that the construction in the sufficiency part of Theorem \ref{outerplanarReductionTheorem} can be implemented in
linear time. 

\begin{theorem}
\label{Theorem_ConstructionFromSpartition}
 From a tree S-partition $\pi$ of $\tilde{T}$ with $cost(\pi) \leq t-1$,  a tree $t$-spanner of $G$ can be constructed in $O(|V(G)|)$ time.
\end{theorem}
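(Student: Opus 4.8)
The plan is to show that the construction in the sufficiency part of Theorem~\ref{outerplanarReductionTheorem} never needs to build the subgraphs $G_i$ or their Hamilton cycles $C_i$ explicitly; instead, every edge that the construction deletes can be identified directly from the part labels of $\pi$ together with the face--edge incidences of $G$ that were already set up in Lemma~\ref{Lemma_runningTime}. First I would recall that the construction produces $T$ by two deletions: removing $X_1 \cup \cdots \cup X_q$ (the internal edges of the $G_i$) to obtain $G'$, and then removing one external edge $e_i$ of each $C_i$. Since the correctness --- that the resulting $T$ is a tree $t$-spanner --- is already established in Theorem~\ref{outerplanarReductionTheorem}, it remains only to realise these two deletions in $O(|V(G)|)$ time.

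The first key observation I would prove is that $X_1 \cup \cdots \cup X_q$ is exactly the set of internal edges of $G$ whose two incident interior faces lie in the same part of $\pi$. Indeed, an internal edge $e$ of $G$ lies in $X_i$ iff both faces of $G$ containing $e$ belong to $V_i$; under the reduction these are precisely the internal edges of $G$ corresponding to edges $(v_f, v_g)$ of $\tilde{T}$ with $v_f$ and $v_g$ in the same part. Hence a single scan over $E(\tilde{T})$, testing in $O(1)$ whether the two endpoints carry the same part label and, if so, marking the corresponding internal edge of $G$, identifies all of $X_1 \cup \cdots \cup X_q$.

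The second observation is that deleting internal edges leaves the exterior face unchanged, so the external edges of $G'$ coincide with the external edges of $G$; moreover every external edge incident to a face $f$ with $v_f \in V_i$ lies on the boundary of $G_i$, i.e.\ on $C_i$ (such an edge belongs to only one interior face, namely $f$, so it is external in $G_i$ as well). Consequently I may take $e_i$ to be any external edge of $G$ incident to the unique $E\emph{-}face$ contained in $V_i$, which exists because $|V_i \cap S| = 1$. I would realise this by one scan over the external edges of $G$: each external edge lies in a unique interior face, which is an $E\emph{-}face$, so I look up its part and, if that part has not yet been assigned an edge, designate it as $e_i$. Deleting all marked internal edges and all chosen $e_i$ then yields $T$.

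For the running time, the face set and the face--edge incidences are available in linear time by Lemma~\ref{Lemma_faceOuterPlanar}, the tree $\tilde{T}$ and the edge correspondence were built in linear time in Lemma~\ref{Lemma_runningTime}, and the part labels are read directly from $\pi$. Because $G$ is outerplanar we have $|E(G)| \le 2|V(G)| - 3$ and $|V(\tilde{T})|, |E(\tilde{T})| = O(|V(G)|)$, so each of the passes above runs in $O(|V(G)|)$ time. The step I expect to require the most care is verifying that this direct, label-based selection of deleted edges produces exactly the tree $T$ of the abstract construction --- in particular that an external edge chosen through the $E\emph{-}face$ of $V_i$ really lies on $C_i$, so that no explicit Hamilton-cycle computation is needed --- and that exactly one external edge is removed per part, so that the deletions leave a spanning tree rather than something disconnected or still cyclic.
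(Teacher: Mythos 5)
Your proposal is correct and follows the same overall route as the paper: both implement the sufficiency construction of Theorem~\ref{outerplanarReductionTheorem} by first deleting the internal edges whose two incident interior faces lie in the same part of $\pi$ (your scan over $E(\tilde{T})$ with part labels is equivalent to the paper's $count[\,\cdot\,]$/$part[\,\cdot\,]$ bookkeeping over face--edge incidences), and then deleting one external edge per part. The only real divergence is in how the edges $e_1,\ldots,e_q$ are selected: the paper recomputes the interior faces of $G'$ (its Step~4, via Lemma~\ref{Lemma_faceOuterPlanar} and Lemma~\ref{Lemma_findExteriorEdges}) and picks an edge of each face $C_i$ that is external in $G'$, whereas you pick an external edge of $G$ incident to the unique $E$-face of $G$ in $V_i$, justified by the observation that such an edge lies on exactly one interior face of $G_i$ and hence on $C_i$. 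Since each part contains exactly one special vertex, and since the external edges of $G'$ coincide with those of $G$, the two selection rules in fact admit the same set of choices, so your shortcut is valid; it buys a slightly simpler algorithm (no second face computation on $G'$) at the cost of the small structural argument you supply, which correctly rests on Lemma~\ref{subgraphOuterPlanar} identifying the interior faces of $G_i$ with $V_i$. Both versions run in $O(|V(G)|)$ time and both defer the correctness of the resulting tree to Theorem~\ref{outerplanarReductionTheorem}, exactly as the paper does.
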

\begin{proof}
Given a tree S-partition $\pi = \{V_1, \ldots, V_q\}$ of $\tilde{T}$ with $cost(\pi) \leq t-1$,
we present a linear-time algorithm to construct a tree $t$-spanner of $G$. 
For $1 \leq i \leq q$, let $G_i$ be a 2-connected outerplanar graph, where $V(G_i) = \bigcup_{ v_{f} \in V_{i} } V(f)$ and $E(G_i) = \bigcup_{ v_{f} \in V_{i} } E(f)$ and $C_i$ be the Hamilton cycle in $G_i$.  For the vertex sets $V_1, \ldots, V_q$ in $\tilde{T}$, the corresponding parts in $G$ are $G_1, \ldots, G_q$.

For each $e \in E(G)$, we use $count[e]$ to store the number of interior faces in $G$ that contain $e$; $part[e]$ is used store the set of indices $i$ from $\{1,\ldots, q\}$, such that $G_i$ has $e$.
For each $1 \leq i \leq q$, we use $X_i$ to store the set of internal edges in $G_i$.

\textbf{Step 1} Compute  $Int\emph{-}faces(G)$ in linear time (cf. Lemma \ref{Lemma_faceOuterPlanar}). 
 For each $e \in E(G)$, initialize $count[e]=0$ and $part[e] = \emptyset$. For each $1 \leq i \leq q$, initialize $X_i = \emptyset$.

\textbf{Step 2}  For each $1 \leq i \leq q$, for each $v_f \in V_i$, and for each $e \in E(f)$, increment $count[e]$ by one and if $i \notin part[e]$, then insert the index $i$ into $part[e]$. 
For each edge $e \in E(G)$ and if $count[e]=2$  and $|part[e]|=1$, then $X_i \leftarrow X_i \cup \{e\}$, where $i = part[e]$. \\
\texttt{/* $e$ is an interior edge in $G_i$ if and only if $count[e]=2$ and  $part[e]=\{i\}$. \\ Therefore, for $1 \leq i \leq q$, $X_i$ is the set of interior edges in $G_i$ and $E(G_i) = E(C_i) \uplus X_i$. */}

\textbf{Step 3} $G' \leftarrow G -  (X_1 \cup \ldots \cup X_q)$. \\
\texttt{/* $Int\emph{-}faces(G') = E\emph{-}faces(G') = \{C_1, \ldots, C_q\}$ */}.

\textbf{Step 4} Compute $E\emph{-}faces(G') = \{C_1, \ldots, C_q\}$ in linear time (Lemma \ref{Lemma_faceOuterPlanar} is applicable). For each $1 \leq i \leq q$, $Ex(C_i) \leftarrow$ find the set of edges in face $C_i$ that are external in $G'$ in linear time (cf. Lemma \ref{Lemma_findExteriorEdges}). For each $1 \leq i \leq q$, $e_i \leftarrow$ choose an edge from $Ex(C_i)$.

\textbf{Step 5} $T \leftarrow G' - \{e_1, \ldots, e_q\}$. $T$ is a tree $t$-spanner of $G$.

In Step 2, for each $e \in E(G)$, $count[e]$ and $part[e]$ are updated at most two times, because every edge is there in at most two faces. Further, $|E(G)| \leq 2|V(G)| -3$. Thus Step 2 takes $O(|V(G)|)$ time. The rest of the steps clearly take linear time. 
Thus a tree $t$-spanner of $G$ is obtained in $O(|V(G)|)$ time.
\qed
\end{proof}

\begin{lemma}
\label{Lemma_findExteriorEdges}
For each $f \in Int\emph{-}faces(G)$, the set of edges of $f$ that are external in $G$ can be obtained in linear time.
\end{lemma}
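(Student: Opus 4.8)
The plan is to exploit the simple combinatorial characterization that, in a $2$-connected outerplanar graph, an edge is external if and only if it is contained in exactly one interior face. Since $G$ is $2$-connected and planar, every edge lies on the boundary of exactly two faces of $G$. For an external edge $e$, one of these two faces is the exterior face $f_{ext}$, so $e$ belongs to exactly one interior face; for an internal edge $e$, both incident faces are interior, so $e$ belongs to exactly two interior faces. This yields the desired test: $e$ is external in $G$ if and only if exactly one interior face of $G$ contains $e$.

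First I would compute $Int\emph{-}faces(G)$ in linear time using Lemma \ref{Lemma_faceOuterPlanar}. Next, maintaining an array $count[\cdot]$ indexed by the edges of $G$ and initialized to zero, I would iterate over every interior face $f$ and every edge $e \in E(f)$, incrementing $count[e]$ by one; at the end of this pass $count[e]$ equals the number of interior faces containing $e$. Finally, to produce the output, I would make a second pass over the interior faces: for each $f \in Int\emph{-}faces(G)$ and each $e \in E(f)$, I report $e$ as an external edge of $f$ precisely when $count[e] = 1$.

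For the running time, note that each edge of $G$ lies in at most two interior faces, so $\sum_{f \in Int\emph{-}faces(G)} |E(f)| \le 2|E(G)|$. Hence both passes over the face--edge incidences run in time linear in $|E(G)|$, and since $|E(G)| \le 2|V(G)|-3$ for outerplanar graphs, the whole procedure takes $O(|V(G)|)$ time. (Note also that this is essentially the same $count[\cdot]$ bookkeeping already used in the proof of Theorem \ref{Theorem_ConstructionFromSpartition}, so the implementation reuses machinery that is already in place.)

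The main obstacle is not algorithmic but lies in justifying the characterization cleanly: one must argue that every edge of a $2$-connected outerplanar graph is incident to exactly two faces, and that exactly the external edges are incident to $f_{ext}$. This rests on $2$-connectivity — equivalently, by Lemma \ref{opH2_connected}, on the existence of the unique Hamiltonian cycle whose edge set is precisely the set of external edges — so that the external edges are exactly those bounding $f_{ext}$, each contributing to a single interior face, while internal edges separate two interior faces.
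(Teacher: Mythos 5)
Your proposal is correct and follows essentially the same approach as the paper's proof: compute $Int\emph{-}faces(G)$, count for each edge the number of interior faces containing it via a pass over face--edge incidences, and report an edge of $f$ as external exactly when that count equals one, with linearity following because each edge lies in at most two interior faces and $|E(G)| \leq 2|V(G)|-3$. Your additional justification of the characterization (external iff incident to exactly one interior face, via $2$-connectivity) is a welcome elaboration of what the paper states only briefly.
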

\begin{proof}
For each $e \in E(G)$, we use $count[e]$ to count the number of interior faces containing $e$.
For each $e \in E(G)$, initialize $count[e]=0$.
For each $f \in Int\emph{-}faces(G)$ and for each $e \in E(f)$, increment $count[e]$ by one. 
For each $f \in Int\emph{-}faces(G)$, we use $X[f]$ to store the edges of $f$ that are external in $G$.
For each $f \in Int\emph{-}faces(G)$, initialize $X[f] = \emptyset$.
For each $f \in Int\emph{-}faces(G)$ and for each $e \in E(f)$, if $count[e]=1$, then $X[f] \leftarrow X[f] \cup \{e\}$.
For each $f \in Int\emph{-}faces(G)$, the set of edges in $X[f]$ are the edges of $f$ that are external in $G$, because the edges in $X[f]$ appear only in the interior face $f$. 
As each edge in outerplanar graph appears in at most two interior faces, all the steps together take linear time. 
\qed
\end{proof}

We have shown in this section that a tree $t$-spanner for $G$ can be obtained by solving \textsc{tree S-partition} in the weak dual of $G$ in linear time. In the following section we show that \textsc{tree S-partition} can be solved in linear time by a reduction to the \textsc{supply-demand tree partition}.

\section{Tree S-partition via Supply-demand Tree Partition}
\label{SupplyDemand_section}

In this section, we present a linear time reduction from \textsc{tree S-partition} to \textsc{supply-demand tree partition}.  \textsc{supply-demand tree partition} can be solved in linear time \cite{PartitioningTreesSupplyDemand}, and the composition of all these linear time procedures results in a linear time algorithm for \textsc{tree $t$-spanner} in outerplanar graphs.

Let $\langle T, S \subseteq V(T), w , t \rangle$ be an input instance of \textsc{tree S-partition}.
  We now describe the construction of an instance $\langle T'$, $s: V_{s}(T') \to \mathbb{R}$,  $d:V_{d}(T') \to \mathbb{R} \rangle$ of \textsc{supply-demand tree partition}  from $\langle T, S \subseteq V(T), w, t \rangle$.
For every special vertex $u$ in $T$, add a new vertex $u'$ and a new edge $(u,u')$ to obtain $T'$.
The newly added vertices are considered as supply vertices $V_{s}$ and the rest of the vertices are considered as demand vertices $V_d$. Further, for each $u \in V_{s}$, assign $s(u) = t$ and for each $u \in V_{d}$, assign $d(u) = w(u)$. Clearly, the reduction takes $O(|V(T)|)$ time.

\begin{theorem}
\label{Theorem_2ndReduction}
$\langle T, S \subseteq V(T), w , t \rangle$ is a YES instance of \textsc{tree S-partition} if and only if $\langle T'$, $s: V_{s}(T') \to \mathbb{R}$,  $d:V_{d}(T') \to \mathbb{R} \rangle$ is a YES instance of \textsc{supply-demand tree partition}. 
 From a supply-demand tree partition of $T'$, a tree S-partition of  $T$ can be obtained in $O(|V(T)|)$ time.

\end{theorem}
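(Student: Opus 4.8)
The plan is to prove the biconditional by exhibiting a natural bijection between partitions of the two instances, and then to observe that this bijection is computable in linear time. The construction turns each special vertex $u \in S$ into a pendant supply vertex $u'$ with $s(u') = t$, demotes $u$ itself to a demand vertex with $d(u) = w(u)$, and sets $d(v) = w(v)$ for every other $v \in V(T)$. Since $|V_s| = |S|$ and the supply and demand constraints mirror the S-partition constraints almost verbatim, the core of the argument is to check that the two families of constraints agree once we account for the bookkeeping of the extra pendant vertices.

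First I would prove necessity ($\Rightarrow$). Given a tree S-partition $\pi = \{V_1, \ldots, V_{|S|}\}$ of $T$, I form the partition $\pi' = \{V_1', \ldots, V_{|S|}'\}$ of $V(T')$ by attaching to each part $V_i$ the pendant vertex $u_i'$ corresponding to the unique special vertex $u_i \in V_i \cap S$; that is, $V_i' = V_i \cup \{u_i'\}$. I must verify the three supply-demand constraints for each $V_i'$: (i) $T'[V_i']$ is connected, which holds because $T'[V_i]$ is connected (it equals $T[V_i]$, which is connected in $T$) and $u_i'$ is joined to $u_i \in V_i$ by the edge $(u_i,u_i')$; (ii) $V_i'$ contains exactly one supply vertex, namely $u_i'$, since the only supply vertices are the pendants and exactly one special vertex lies in each $V_i$; and (iii) $\sum_{v \in V_i' \setminus V_s} d(v) = \sum_{v \in V_i} w(v) \le t = s(u_i')$, using the S-partition bound. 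Hence $\pi'$ is a supply-demand tree partition of $T'$.

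Next I would prove sufficiency ($\Leftarrow$) together with the linear-time construction, since they use the same map in reverse. Given a supply-demand tree partition $\pi' = \{V_1', \ldots, V_k'\}$ of $T'$, where $k = |V_s| = |S|$, I set $V_i = V_i' \setminus V_s$, i.e. delete the pendant supply vertex from each part. I would argue that deleting the pendant $u_i'$ (a leaf of $T'$) keeps $T'[V_i'] \cap T$ connected and recovers a subtree of $T$; that each $V_i$ then contains exactly one vertex of $S$ because the pendant $u_i'$ was adjacent in $T'$ only to its own special vertex $u_i \in V_i$, and every special vertex is accounted for exactly once since the $k$ pendants land in $k$ distinct parts; and that the weight bound transfers verbatim, $\sum_{v \in V_i} w(v) = \sum_{v \in V_i' \setminus V_s} d(v) \le s(u_i') = t$. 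Thus $\{V_1, \ldots, V_k\}$ is a tree S-partition of $T$. The transformation $V_i' \mapsto V_i' \setminus V_s$ touches each vertex of $T'$ once, so it runs in $O(|V(T)|)$ time.

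The one subtle point to get right is the argument that each recovered part $V_i$ contains exactly one special vertex. The supply-demand definition guarantees exactly one supply vertex per part, but I must translate this into a statement about the special vertices of $T$. The key observation is that the pendant $u_i'$ has degree one in $T'$ and its unique neighbour is the special vertex $u_i$; since supply vertices are an independent set of leaves in $T'$, the part containing $u_i'$ must also contain $u_i$ (connectivity forces $u_i'$ to be attached through $u_i$), and no part can contain two special vertices without either containing two pendants or violating connectivity. Establishing this correspondence cleanly is the main obstacle, and I expect it to be handled by a short case analysis on the pendant edges rather than by any heavy machinery.
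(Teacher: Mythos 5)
Your forward direction coincides with the paper's and is correct. The reverse direction, however, has a genuine gap: the map $V_i' \mapsto V_i' \setminus V_s$ does \emph{not} in general produce a tree S-partition, because a supply-demand tree partition of $T'$ may contain a singleton part $\{u'\}$ consisting of a supply vertex alone. Such a part is perfectly legal under the problem's definition: it is connected, contains exactly one supply vertex, and its total demand $0$ is at most $s(u') = t$. So your key claim that ``connectivity forces $u_i'$ to be attached through $u_i$'' is false. Concretely, let $T$ be the path $u,x,w$ with $S=\{u,w\}$, all weights $1$, and $t \geq 3$; then $\{u',u,x,w\}$ together with $\{w'\}$ is a valid supply-demand tree partition of $T'$, yet deleting the supply vertices leaves one part containing two special vertices and one empty part. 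The same example defeats your fallback claim that no part can contain two special vertices without containing two pendants or violating connectivity: the part $\{u',u,x,w\}$ contains both $u$ and $w$, has a single pendant, and is connected. The ``short case analysis'' you anticipate cannot close this, because the problematic partitions genuinely exist; what is needed is a modification of the partition, not an argument that it is already of the desired form.

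This is precisely where the paper's proof differs. It roots each $T'[V_i']$ at its supply vertex and then \emph{repairs} every singleton part $\{u'\}$: it finds the part $T'_j$ containing the special vertex $u$, cuts the edge from $u$ to its parent in $T'_j$, and transfers the entire subtree rooted at $u$ into the part of $u'$. The demand bounds survive this surgery, since the transferred subtree's demand is a subset of the demand of $V_j'$ (hence at most $t$), and part $j$ retains its own supply vertex together with that vertex's special neighbour (the parent of $u$ cannot be a supply vertex, as each vertex of $T$ has at most one pendant supply neighbour). Only after this repair does deleting the supply vertices yield a tree S-partition, and the repair still runs in $O(|V(T)|)$ time. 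Your proposal is missing this repair step, and without it the sufficiency direction does not go through.
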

\begin{proof}
In the forward direction, let $\pi = \{ V_{1}, \ldots, V_{|S|}\}$ be a tree S-partition of $T$.
For each $1 \leq i \leq |S|$, obtain $V'_i = V_i \cup \{ u' \mid u \in V_i \cap S \}$. Now, $\pi' = \{ V'_{1}, \ldots, V'_{|S|}\}$ is a supply-demand tree partition of $T'$.

In the reverse direction, let $\pi' = \{ V'_{1}, \ldots, V'_{k}\}$, where $k= |V_s|$ be an arbitrary supply-demand tree partition of $T'$. 
We first present the steps to construct a tree S-partition $\pi$ of $T$ such that $cost(\pi) \leq t$, followed by the correctness of the steps.

\textbf{Step 1:} For each $i$, let $T'_i = T'[V_i]$; Make the tree $T'_i$ rooted by considering the supply vertex in $T'_i$ as a root vertex.

\textbf{Step 2:} For each $i$, if $T'_i$ has exactly one vertex then do the following: Let $u'$ be the supply vertex in $T'_i$ and its neighbour  $u$ in $T'$ be present in $T'_j$. Delete the edge between $u$ and its parent in $T'_j$ and make $u'$ as the parent for $u$ by adding an edge between $u$ and $u'$.

\textbf{Step 3:} For each $i$, let $V'_i = V(T'_i)$. For each $i$, obtain $V_i$ by removing the supply vertex from $V'_i$. 
$\pi \leftarrow \{ V_1, \ldots, V_k \}$.

We now show that $\pi$ is a tree S-partition of $T$ such that $cost(\pi) \leq t$.
Step 1 is clear, as each $V'_i$ contains exactly one supply vertex. 
The effect of Step 2 is equivalent to moving the subtree rooted under $u$ including $u$ from $T'_j$ to $T'_i$. 
Thus Step 2 is clear.
Observe that the parent of $u$ in $T'_j$ is a not a supply vertex, 
because for each $v  \in V(T)$, at most one supply vertex is a neighbour of $v$. 
It implies that, at the end of Step 2, each $T'_i$ contains exactly one supply vertex and at least one demand vertex. 
Thus Step 3 is clear.
Further,  total demand in $V'_i$ as well as in $V'_j$ is at most $t$. 
Thereby $\{V'_1, \ldots, V'_k \}$ is a supply-demand tree partition of $T'$.
It follows that, the partition $\pi$ is a tree S-partition of  $T$ such that  $cost(\pi) \leq t$.

As we go through the edges in $T'$ a constant number of times and $|V(T')| \leq 2*|V(T)|$,   Steps 1-3 take $O(|V(T)|)$ time.
\qed
\end{proof}

\noindent \textbf{Proof of Theorem \ref{treeSpannerinOP}}
\begin{proof}
Let $\langle G,t \rangle$ be an instance of \textsc{tree $t$-spanner}  in outerplanar graphs.
We apply the linear-time transformation described in Section \ref{OuterPlanarreductionSubSection} on $\langle G,t \rangle$ to obtain an instance $\langle \tilde{T},w,S,t-1 \rangle $ of \textsc{tree S-partition}.
We then apply linear-time transformation described in Section \ref{SupplyDemand_section} on $\langle \tilde{T},w,S,t-1 \rangle $ to obtain an instance $\langle T',s,d \rangle$ of the \textsc{supply-demand tree partition}.
By using the linear-time algorithm in \cite{PartitioningTreesSupplyDemand}, we can determine whether $\langle T',s,d \rangle$ is an YES instance or NO instance.
If it is a NO instance, we can declare that there is no tree $t$-spanner in $G$.
 If it is an YES instance, then the algorithm in \cite{PartitioningTreesSupplyDemand} also produces a supply-demand tree partition $\pi'$. 
From $\pi'$, the construction of a tree S-partition $\pi$ of $\tilde{T}$ such that $cost(\pi) \leq t-1$  takes $O(|V(\tilde{T})|)$ time (Theorem \ref{Theorem_ConstructionFromSpartition}). 
From $\pi$, a tree $t$-spanner of $G$ can be obtained in $O(n)$ time (Theorem \ref{Theorem_2ndReduction}).
 As $V(\tilde{T}) = O(n)$, the existence of a tree $t$-spanner can be found in $O(n)$-time.
 Further, computing a tree $t$-spanner of $G$, if one exists, takes $O(n)$ time.
Thus the theorem.
\qed
\end{proof}



\bibliographystyle{plain}
\bibliography{references1}



\end{document}